\documentclass{llncs}
%%%%%%%%%%%%%%%%%%%%%%%%%%%%%%% begin of macro %%%%%%%%%%%%%%%%%%%%%%%%%%
%\newtheorem{theorem}{Theorem}[section]
%\newtheorem{lemma}[theorem]{Lemma}
%\newtheorem{corollary}[theorem]{Corollary}
%\newtheorem{proposition}[theorem]{Proposition}

%\setlength{\textwidth}{16.0cm}
%\setlength{\textheight}{22.5cm}
%\oddsidemargin=0.1cm \topmargin=-1.2cm

%\newenvironment{proof}{$\;$\newline \noindent {\sc Proof.}$\;\;\;$\rm}{\qed}
%\newcommand{\qed}{\hspace*{\fill}\blackslug}
%\newenvironment{definition}{$\;$\newline \noindent {\bf Definition}$\;$}{$\;$\newline}
\def\boxit#1{\vbox{\hrule\hbox{\vrule\kern4pt
  \vbox{\kern1pt#1\kern1pt}
\kern2pt\vrule}\hrule}}
%%%%%%%%%%%%%%%%%%%%%%%%%%%%%%% end of macro %%%%%%%%%%%%%%%%%%%%%%%%%%%%%
% \addtolength{\baselineskip}{+0.6mm}

\usepackage[tight,hang,nooneline]{subfigure}
 \usepackage{amsmath}
 \usepackage{amssymb}
\usepackage{psfrag}
\usepackage{here}
\usepackage{graphicx}
\usepackage{url} %,hyperref}
%Algorithms
\usepackage{algorithm,myalgorithmic}

\newcommand{\defines}{\emph}
\newcommand{\mc}{\mathcal}

%\spnewtheorem{obs}[theorem]{Observation}{\bfseries}{\itshape}

%\theoremstyle{plain}
%\newtheorem{obs}{Observation}

\newcommand{\comment}[1]{ }

%%%nice answers to algorithms
\newcommand{\YES}{\textup{\textsf{YES}}}
\newcommand{\NO}{\textup{\textsf{NO}}}
\newcommand{\Oh}{{\mc{O}}}

%MATHS

\renewcommand{\int}{\operatorname{int}}       % interior

\newcommand{\sm}{\setminus}
       % branchwidth

       % treewidth
 %degree

\newcommand{\Pol}{\mbox{$\mc{P}$}}

\newcommand{\NP}{\mbox{$\mc{NP}$}}

%%%%%%%%%%%%%%%%%%%%%%%%%%%%%%%%%%%%%%%%Artikelspezifidvhe Kommandios9
\newcommand{\q}{{\cal Q}}
\newcommand{\p}{{\cal P}}
\newcommand{\qq}{\mathfrak{Q}}

%%%%%%%%%%%%%%%%%%%%%%%

\newcommand{\emphprob}[4]{{\bf {\sc  #1}} \\
{\bf Given:} #2, and the parameter #3.\\
{\bf We ask:} #4}

%%LAYOUT
\pagenumbering{arabic}

\pagestyle{headings}

\begin{document}

\title{A Parameterized Perspective on $P_2$-Packings}
\author{Jianer Chen\inst{2,3}
      Henning Fernau\inst1 
       Dan Ning\inst2
      Daniel Raible\inst1
	Jianxin Wang\inst2
     }
\institute{%
Universit\"at 
Trier, FB IV---Abteilung Informatik, %\\
54286 Trier,
Germany,  
\\
\email{\{fernau,raible\}@informatik.uni-trier.de}
\and
School of Information Science and Engineering, 
Central South University, 
Changsha, Huan 410083, P.R. China,  
\email{alina1028@hotmail.com,jxwang@mail.csu.edu.cn}
\and
Department of Computer Science, 
Texas A\&M University, 
College Station,\\ Texas 77843, USA, 
\email{chen@cs.tamu.edu}
}

\maketitle
\begin{abstract}We study (vertex-disjoint) $P_2$-packings in graphs under a parameterized perspective. Starting from a maximal $P_2$-packing $\p$ of size $j$ we use extremal arguments for determining how many vertices of $\p$ appear in some $P_2$-packing of size $(j+1)$. We basically can 'reuse' $2.5j$ vertices. We also present a kernelization algorithm that gives a kernel of size
bounded by $7k$. With these two results we build an algorithm which constructs a $P_2$-packing of size $k$ in time $\Oh^*(2.482^{3k})$.
\end{abstract}

\section{Introduction and Definitions}

\paragraph{Motivation.}
We consider a generalization of the  matching problem in graphs. A matching is a maximum cardinality set of vertex disjoint edges. Our problem is a generalization in the sense that the term edge may be replaced by 2-edge-path called $P_2$. More formally, we study the following problem, called 
\emphprob{$P_2$-packing:}{A graph $G=(V,E)$}{$k$}{Is there a set of $k$ vertex-disjoint $P_2$'s in $G$?}\\[1ex]
P. Hell and D. Kirkpatrick~\cite{KirHel78,HelKir82} proved \NP-completeness for this problem. 
%In fact, they showed that general {\sc Maximum $H$-Packing} is \NP-complete.  Here, $H$ is a graph with at least three vertices in some connected component. 
 {\sc $P_2$-packing} attracts attention as it is \NP-hard whereas the matching problem, which is {\sc $P_1$-packing}, is poly-time solvable.\\
Also there is a primal-dual relation to {\sc total edge cover} shown by H.~Fernau and D.~F. Manlove~\cite{FerMan06a}. 
Recall that an \defines{edge cover} is a set of edges  $EC \subseteq E$  that cover all vertices of a given graph $G=(V,E)$.
An edge cover is called \defines{total} if  every component in $G[EC]$ has at least two edges.
%ere we look for a total edge cover $EC \subseteq E$ such that in $G[EC]$ every component has at least $2$ edges. 
By matching techniques, the problem of finding an edge cover of size at most $k$ is poly-time solvable.
However, the following Gallai-type identity shows that finding total edge covers of size at most $k$ is
\NP-hard: The sum of the number of $P_2$'s in a maximum $P_2$-packing and the size of  a minimum total edge cover equals $n=|V|$.\\
% We can also find 
There is also a relation to {\sc test cover}. The input to this problem is a hypergraph $H=(G,E)$ and one wishes to identify a subset $E' \subseteq E$ (the \defines{test cover}) such that, for any distinct $i,j \in V$, there is an $e' \in E'$ with $|e' \cap \{i.j\}|=1$. {\sc Test cover} models identification problems: Given a set of individuals and a set of binary attributes we search for a minimum subset of attributes that identifies each individual distinctly. Applications, as mentioned in K.~M.~J. Bontridder \emph{et al.}~\cite{Bonetal2003}, range from fault testing and diagnosis, pattern recognition to biological identification. K.~M.~J. Bontridder \emph{et al.} could show for the case TCP2, where  for all $e \in E$  we have $|e|\le 2$,  the subsequent two statements. First, if $H$ has a test cover of size $\tau$, 
then there is a $P_2$-packing of size $n-\tau-1$ that leaves at least one vertex isolated. Second, if $H$ has a maximal $P_2$-packing of size $\pi$ that leaves at least one vertex isolated, then there is a test cover of size $n-\pi-1$. This also establishes a close relation between {\sc test cover} and {\sc total edge cover}. 
So, we can employ 
%In particular, this means that one can use 
our algorithms to solve the TCP2 case of {\sc test cover}, by using an initial
catalytic branch that determines one vertex that should be isolated.
 
\paragraph{Discussion of Related Work.}
 R.~Hassin and S.~Rubinstein~\cite{HasRub2006} found a randomized  $\frac{35}{67}$-approximation.
K.~M.~J. Bontridder \emph{et al.} \cite{Bonetal2003} studied $P_2$-packing also in the context of approximation, where 
they considered a series of heuristics $H_\ell$. $H_\ell$ 
%proceeds in the following manner: it 
 starts from a maximal $P_2$-packing $\p$ and attempts to improve it by replacing any $\ell$ $P_2$'s by $\ell+1$ $P_2$'s not contained yet in $\p$. The corresponding approximation ratios $\rho_\ell$ are as follows: $\rho_0=\frac{1}{3}$, $\rho_1=\frac{1}{2}$, $\rho_2=\frac{5}{9},\rho_3=\frac{7}{11}$ and $\rho_\ell=\frac{2}{3}$ for $\ell\geq 4$.
\\
As any {\sc $P_2$-packing} instance can be transformed to fit in  {\sc 3-set packing} one can use Y.~Liu \emph{et al.}~\cite{Liuetal2006} algorithm which needs $\Oh^*(4.61^{3k})$ steps.  
The first paper to individually study {\sc $P_2$-packing} under a parameterized view was E.~Prieto  and C.~Sloper~\cite{PriSlo2004}. The authors were able to prove a $15k$-kernel. Via a clever 'midpoint' search on the kernel they could achieve a run time of $\Oh^*(3.403^{3k})$.
 
\paragraph{Our Contributions.} The two main algorithmic achievements of this paper are: (1) a new $7k$-kernel for {\sc $P_2$-Packing},
(2) an algorithm which solves this problem in $\Oh^*(2.482^{3k})$. This algorithm makes use of a new theorem that says that we basically can reuse $2.5j$ vertices of a maximal $P_2$-packing with size $j$. This improves a similar result for general {\sc 3-set packing}~\cite{Liuetal2006} where only $2j$ elements are reusable. This theorem is proven by making extensive use of extremal combinatorial arguments. 
Another novelty is that in this algorithm,
% makes extensive use of the kernelization. This means that 
 the dynamic programming phase (used
to inductively augment maximal $P_2$-packings) is interleaved with kernelization. 
This pays off not only heuristically but also asymptotically by a specific form of combinatorial analysis. Thereby we can completely skip the time consuming color-coding which was needed in Liu \emph{et al.}~\cite{Liuetal2006} for {\sc 3-set packing}. We believe that the idea of
saving colors by extremal combinatorial arguments could be applied in other situations, as well.

\paragraph{Some Notations and Definitions.} 
%%%%%%%%%%%%%%%%%%%%%%%Chen
We only consider undirected graphs $G = (V, E)$. For a subgraph $H$ of $G$, denote by
$N(H)$ the set of vertices that are not in $H$ but adjacent to at
least one vertex on $H$, i.e., $N(H)=(\bigcup_{v\in H}N(\{v\}))\sm H$. 
%We say that t
The subgraph $H$ is {\it
adjacent} to a vertex $v$ if $v \in N(H)$. 
%In particular, an edge
%$e$ is adjacent to a vertex $v$ if $v$ is not an end of $e$ but
%adjacent to at least one end of $e$.
%%%%%%%%%%%%%%%%%%%%%%%%%%%%%%%%%%%%%%%%%%
 A $P_2$ in $G$ is a path which consists of three vertices and two edges. For any path $p$ of this kind  we consider the vertices as numbered such that $p=p_1p_2p_3$ (where the roles of $p_1$ and $p_3$ might be interchanged). For a path $p$, $V(p)$ ($E(p)$, resp.) denotes the set of vertices (edges, resp.) on $p$. Likewise, for a set of paths $\p$, $V(\p):=\bigcup_{p\in\p}V(p)$ ($E(\p):=\bigcup_{p \in \p}E(p)$, resp.). %Two vertices $u,v$ are \emph{path-neighbors} with respect to $p$ if $\{u,v\} \in E(p)$.
\section{Kernelization}

We are going to improve on the earlier $15k$-kernel of E.~Prieto and C.~Sloper by allowing
local improvements on a maximal $P_2$-packing, but otherwise using the ideas of E.~Prieto and C.~Sloper.
Therefore, we first revise the necessary notions and lemmas from their paper~\cite{PriSlo2004}.

\subsection{Essential Prerequisites}

\begin{definition} %(\cite{PriSlo2004})
A {\it double crown decomposition} of a graph $G$ is a decomposition
$(H, C, R)$ of the vertices in $G$ such that
\begin{enumerate}
 \item $H$ (the head) separates $C$ and $R$;
\item 
 $C = C_0 \cup C' \cup C''$ (the crown) is an independent set such
that $|C'| = |H|$, $|C''| = |H|$, and there exist a perfect matching
between $C'$ and $H$, and a perfect matching between $C''$ and $H$.
\end{enumerate}
\end{definition}

%\vspace{-5mm}

\begin{definition} %(\cite{PriSlo2004})
A {\it fat crown decomposition} of a graph $G$ is a decomposition
$(H, C, R)$ of the vertices in $G$ such that
\begin{enumerate}
 \item $H$ (the head) separates $C$ and $R$;
\item the induced subgraph $G(C)$ is a collection of pairwise disjoint
$K_2$'s;
\item there is a perfect matching $M$ between $H$ and a subset of
vertices in $C$ such that each connected component in $C$ has at
most one vertex in $M$.
\end{enumerate}
\end{definition}

%We have 
E.~Prieto and C.~Sloper could show the following lemmas.

\begin{lemma} %{\rm (\cite{PriSlo2004})}
\label{lem1}
 A graph $G$ with a double crown $(H, C, R)$ has a $P_2$-packing
 of size $k$ if and only if the graph $G - H -C$ has a
 $P_2$-packing of size $k - |H|$.
\end{lemma}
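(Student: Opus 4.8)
The plan is to prove the "if and only if" of Lemma~\ref{lem1} by exhibiting, in each direction, an explicit construction of a $P_2$-packing of the required size. The key structural facts I would exploit are the two defining properties of a double crown decomposition $(H,C,R)$: first, that $H$ separates $C$ from $R$ (so no edge runs directly between $C$ and $R$), and second, that there are two perfect matchings $M'$ (between $C'$ and $H$) and $M''$ (between $C''$ and $H$), each saturating $H$.

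\medskip

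\textbf{The easy direction ($\Leftarrow$).} Suppose $G-H-C$ has a $P_2$-packing $\p'$ of size $k-|H|$. Note that $G-H-C$ is exactly $G[R]$. I would simply augment $\p'$ by one $P_2$ per vertex of $H$: for each $h\in H$, let $c'_h\in C'$ and $c''_h\in C''$ be the vertices matched to $h$ under $M'$ and $M''$ respectively, and form the path $c'_h\,h\,c''_h$. This is a genuine $P_2$ since both matching edges exist. Because $M'$ and $M''$ are matchings, distinct heads receive disjoint endpoints in $C'\cup C''$, and these $|H|$ new paths live entirely inside $H\cup C$, hence are vertex-disjoint from $\p'$, which lives in $R$. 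Thus $\p'$ together with these $|H|$ paths is a $P_2$-packing of size $(k-|H|)+|H|=k$.

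\medskip

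\textbf{The harder direction ($\Rightarrow$).} Suppose $G$ has a $P_2$-packing $\p$ of size $k$; I must extract a $P_2$-packing of size $k-|H|$ from $G[R]$. The obstacle is that paths in $\p$ may straddle the decomposition in complicated ways, using vertices of $H$, $C$, and $R$ together, so I cannot simply restrict $\p$ to $R$. The strategy is a \emph{charging/replacement} argument: I want to show that at most $|H|$ of the paths of $\p$ can fail to lie entirely within $R$, so that deleting all paths touching $H\cup C$ removes at most $|H|$ paths and leaves a packing of size at least $k-|H|$ inside $R$. The crux is the separation property: any path of $\p$ that uses a vertex of $C$ must also use a vertex of $H$, because $C$ has no edges to $R$ and (since $C$ is independent) no edges within $C_0\cup C'\cup C''$ except none at all, so a vertex of $C$ on a $P_2$ can only be an endpoint adjacent to an interior vertex in $H$. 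Consequently every path meeting $H\cup C$ in fact meets $H$, and since the paths are vertex-disjoint and $|H|$ vertices are available, at most $|H|$ paths of $\p$ touch $H\cup C$. Deleting exactly those paths yields a sub-packing of $\p$, all of whose paths avoid $H$ and $C$ and hence lie in $R=G-H-C$; its size is at least $k-|H|$, and truncating to exactly $k-|H|$ completes the argument.

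\medskip

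I expect the main obstacle to be the careful bookkeeping in the ($\Rightarrow$) direction: one must argue precisely why a $P_2$ cannot use two vertices of $C$ while touching $H$ only once in a way that escapes the count, and why independence of $C$ together with the separation of $C$ from $R$ forces every $C$-vertex on a packing path to be "anchored" at a head vertex. Once this anchoring observation is made rigorous, the counting bound of $|H|$ deleted paths is immediate, and both inequalities combine to give the exact equivalence of packing sizes.
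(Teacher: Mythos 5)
Your proof is correct. Note that the paper itself states Lemma~\ref{lem1} without proof, importing it from Prieto and Sloper~\cite{PriSlo2004}; your two constructions --- in one direction augmenting a packing of $G-H-C$ by the $|H|$ vertex-disjoint paths $c'_h\,h\,c''_h$ supplied by the two matchings, and in the other direction charging every packing path that meets $C\cup H$ to a vertex of $H$ (any path meeting $C$ must meet $H$, since $C$ is independent and $H$ separates $C$ from $R$, so all neighbours of $C$-vertices lie in $H$) --- are exactly the standard argument for this crown rule. The difficulty you flag at the end is actually moot: it is irrelevant how many $C$-vertices a single path uses or whether it is ``counted'' more than once, since each offending path occupies at least one vertex of $H$ and the paths are vertex-disjoint, so at most $|H|$ of them can exist; your anchoring observation already gives this, and the bound of $k-|H|$ surviving paths inside $R$ follows immediately.
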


\begin{lemma} %{\rm (\cite{PriSlo2004})}
\label{lem2}
 A graph $G$ with a fat crown $(H, C, R)$ has a $P_2$-packing of
 size $k$ if and only if the graph $G - H -C$ has a $P_2$-packing
 of size $k - |H|$.
\end{lemma}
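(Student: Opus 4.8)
The plan is to prove both implications directly, mirroring the argument for the double crown in Lemma~\ref{lem1}. The two structural facts I intend to exploit are that $H$ separates $C$ from $R$ and that $G(C)$ is a collection of disjoint $K_2$'s, so in particular $G(C)$ contains no $P_2$ at all. Everything below reduces to combining these two facts with the matching $M$ and with simple disjointness bookkeeping.

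For the easier (extension) direction I would assume that $G-H-C=G(R)$ admits a $P_2$-packing $\q$ of size $k-|H|$ and build $|H|$ further $P_2$'s inside $H\cup C$. For each head vertex $h\in H$, the matching $M$ assigns a distinct partner $c_h\in C$; since $c_h$ lies in some $K_2$ of $C$, I write $\{c_h,c_h'\}$ for that component and form the path $h\,c_h\,c_h'$, whose edges $hc_h\in M$ and $c_hc_h'$ both exist. Condition~3 of the fat crown forces distinct heads to use $C$-vertices in distinct components, so the partners $c_h'$ are unmatched and all $3|H|$ vertices involved are distinct; hence these $|H|$ paths are pairwise vertex-disjoint and disjoint from $\q\subseteq R$. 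Adjoining them to $\q$ yields a packing of size $(k-|H|)+|H|=k$ in $G$.

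For the converse (projection) direction I would start from a $P_2$-packing $\p$ of size $k$ in $G$ and show that discarding the paths meeting $H\cup C$ still leaves $k-|H|$ paths inside $R$. The crucial claim is: every $p\in\p$ with $V(p)\cap C\neq\es$ also satisfies $V(p)\cap H\neq\es$. Indeed, if some $p$ met $C$ but avoided $H$, then $V(p)\subseteq C\cup R$; since $H$ separates $C$ and $R$ there are no $C$--$R$ edges, so the connected path $p$ would lie entirely in $C$, contradicting that $G(C)$ contains no $P_2$. Consequently every path touching $H\cup C$ in fact touches $H$; as the paths of $\p$ are vertex-disjoint and each such path consumes at least one of the $|H|$ head vertices, at most $|H|$ paths of $\p$ meet $H\cup C$. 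The remaining $\ge k-|H|$ paths are $P_2$'s lying wholly in $R=G-H-C$, which gives the desired packing.

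The step I expect to need the most care is this projection claim, namely ruling out a $P_2$ that is ``hidden'' inside $C$ or that straddles $C$ and $R$; this is precisely where both hypotheses (the separation property and the $K_2$-structure of $G(C)$) are genuinely used, and it is the only place where the fat-crown argument differs meaningfully from the double-crown case of Lemma~\ref{lem1}. Once that claim is in hand, both directions are routine counting.
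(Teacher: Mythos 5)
Your proof is correct: the extension direction via the matching $M$ (one path $h\,c_h\,c_h'$ per head vertex, using that each $K_2$-component of $C$ contains at most one matched vertex) and the projection direction via the observation that any $P_2$ meeting $C$ must meet $H$ (since $H$ separates $C$ from $R$ and $G(C)$ contains no $P_2$) together give exactly the statement. Note that the paper itself offers no proof of this lemma---it is quoted from Prieto and Sloper---but your argument is the standard crown-decomposition argument used there, so there is nothing to flag.
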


% The following two lemmas will also be used in our discussion.

\begin{lemma} %{\rm (\cite{PriSlo2004})}
\label{lem3}
 A graph $G$ with an independent set $I$, where $|I| \geq 2 |N(I)|$,
 has a double crown decomposition $(H, C, R)$, where $H \subseteq
 N(I)$, which can be constructed in linear time.
\end{lemma}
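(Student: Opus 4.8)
The plan is to reduce the construction of a double crown to a matching problem on an auxiliary bipartite graph, and then extract the decomposition by a König/Hall-type ``expansion'' argument. Since $I$ is an independent set, every edge incident to a vertex of $I$ has its other endpoint in $N(I)$; hence the edges between $I$ and $N(I)$ form a bipartite graph $B$ that carries all the relevant adjacency information. Because a double crown asks for \emph{two} disjoint perfect matchings between $H$ and $C'\cup C''$, I would first build the auxiliary bipartite graph $B_2$ in which each vertex $h\in N(I)$ is replaced by two copies $h^1,h^2$, each inheriting the neighbourhood of $h$ in $I$; the $I$-side is left unchanged. The hypothesis $|I|\ge 2|N(I)|$ is exactly the statement that the $I$-side of $B_2$ is at least as large as its $N(I)$-side, which is the density condition needed to run an expansion argument with expansion factor $q=2$.

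The core step is to find a nonempty $H\subseteq N(I)$ together with a set $C^\ast\subseteq I$ of size $2|H|$ such that (i) each $h\in H$ is matched, through its two copies, to two distinct vertices of $C^\ast$ (a $2$-expansion of $H$ into $C^\ast$), and (ii) $N(C^\ast)\subseteq H$. To produce such $H$ and $C^\ast$, I would compute a maximum matching $M$ in $B_2$. If $M$ saturates the whole doubled side $\{h^1,h^2 : h\in N(I)\}$, then $H=N(I)$ already admits the required $2$-expansion, and condition (ii) holds automatically because $N(C^\ast)\subseteq N(I)=H$. Otherwise I would run the standard alternating-path/König analysis from the unsaturated copies: the set of $N(I)$-vertices whose copies are reachable by $M$-alternating paths from unsaturated copies must be discarded, and the remaining vertices form a self-contained core $H$ that is fully $2$-saturated into a set $C^\ast$ with no edges leaving to the rest of the graph, yielding $N(C^\ast)\subseteq H$.

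Once $H$ and $C^\ast$ are in hand, the decomposition is immediate: split the $2$-expansion into $C'$ (one matched vertex per $h$) and $C''$ (the other), so that $|C'|=|C''|=|H|$ with the two required perfect matchings; put the remaining vertices of $I$ (including any isolated ones) into $C_0$, set $C=C_0\cup C'\cup C''$, and let $R=V\sm(H\cup C)$. Then $C\subseteq I$ is independent, and since $N(C^\ast)\subseteq H$ while the $C_0$-vertices send no edges into $R$, the head $H$ separates $C$ from $R$, so $(H,C,R)$ is a double crown with $H\subseteq N(I)$ as required.

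I expect the main obstacle to be the second step, namely simultaneously guaranteeing the full $2$-expansion \emph{and} the closure property $N(C^\ast)\subseteq H$: one must trim $N(I)$ down to a subset $H$ that is saturated from both sides without leaking adjacencies into $R$, which is precisely the content of the expansion argument and where the cardinality hypothesis is used. A secondary point is the claimed linear running time: the generic expansion argument rests on a maximum matching, which is not linear in general, so to meet the stated bound I would exploit the density $|I|\ge 2|N(I)|$ to implement the matching and the alternating-path trimming in linear time, following the crown-construction machinery of E.~Prieto and C.~Sloper.
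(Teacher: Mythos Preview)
The paper does not give its own proof of this lemma; Lemmas~\ref{lem1}--\ref{lem4} are quoted as prerequisites from Prieto and Sloper~\cite{PriSlo2004}, so there is nothing in the paper to compare against. Your overall plan---pass to the bipartite graph between $I$ and $N(I)$, duplicate the $N(I)$-side so that a $2$-expansion becomes an ordinary matching, and extract the head $H$ by a K\"onig/alternating-path argument---is exactly the standard route behind the cited result.

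There is, however, a real slip in your final paragraph. You set $C_0$ to be \emph{all} of $I\setminus C^\ast$ and then assert that ``the $C_0$-vertices send no edges into $R$''. With the direction you chose for the trimming (start from unsaturated \emph{copies}, let $H$ be the unreached part of $N(I)$, and let $C^\ast$ be the $I$-vertices matched to copies of $H$), this claim is false: a vertex of $I\setminus C^\ast$ that is matched to a reached copy, or that is adjacent to any discarded $h\in N(I)\setminus H$, has an edge straight into $R$. For a concrete instance take $N(I)=\{a,b\}$, $I=\{1,2,3,4\}$, with edges $a1$, $a2$, $b3$; then $|I|\ge 2|N(I)|$, your procedure yields $H=\{a\}$, $C^\ast=\{1,2\}$, and $C_0=\{3,4\}$, but vertex $3$ is adjacent to $b\in R$, so $H$ does not separate $C$ from $R$.

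The repair is easy and does not change the spirit of the argument. Either simply take $C_0=\emptyset$ (nothing in the definition of a double crown requires $C_0\neq\emptyset$, and you have already secured $N(C^\ast)\subseteq H$), or run the alternating-path exploration from the unsaturated \emph{$I$-side} instead: then the reached $I$-vertices form $C$ with $N(C)\subseteq H$ automatically, and the rest of $I$ goes harmlessly into $R$. The latter is the direction used in the Prieto--Sloper crown machinery you already invoke for the linear-time claim.
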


\begin{lemma} %{\rm (\cite{PriSlo2004})}
\label{lem4}
 A graph $G$ with a collection $J$ of independent $K_2$'s, where
 $|J| \geq |N(J)|$, has a fat crown decomposition $(H, C, R)$,
 where $H \subseteq N(J)$, which can be constructed in linear time.
\end{lemma}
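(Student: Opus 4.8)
The plan is to reduce the existence of a fat crown decomposition to a bipartite matching (expansion) argument on an auxiliary graph, entirely parallel to how Lemma~\ref{lem3} is obtained, but matching $N(J)$ only once instead of twice.

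First I would build an auxiliary bipartite graph $B$ whose two sides are $X := N(J)$ and $Y := J$, where each $K_2$ of $J$ is contracted to a single node of $Y$; I place an edge between $v \in X$ and a $K_2$ in $Y$ whenever $v$ is adjacent in $G$ to an endpoint of that $K_2$. The key observation is that the three defining properties of a fat crown translate into a single requirement on $B$: find $X' \subseteq X$ and $Y' \subseteq Y$ together with a matching of $B$ that saturates $X'$ into $Y'$ and satisfies $N_B(Y') \subseteq X'$. Setting $H := X'$ and letting $C$ be the set of $K_2$'s indexed by $Y'$ then yields exactly what we want: property~2 holds because $J$ is a collection of independent $K_2$'s, so $G(C)$ is a disjoint union of $K_2$'s; property~3 holds because each $K_2$ is a single node of $B$, so any matching assigns at most one head vertex per component, which is realized in $G$ by choosing the relevant endpoint of that $K_2$; and property~1 (separation) holds because the $K_2$'s of $J$ are pairwise non-adjacent, whence the only neighbors of $C$ lying outside $C$ are the $N(J)$-vertices of $N_B(Y') \subseteq H$, so no edge joins $C$ to $R$.

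Second, I would produce $X'$ and $Y'$ by the standard crown/expansion construction. Compute a maximum matching $M$ in $B$; since $|Y| = |J| \ge |N(J)| = |X|$, either $M$ already saturates $X$ (take $X' = X$, $Y' = Y$), or $M$ leaves some $Y$-vertex unsaturated, in which case I grow the $M$-alternating forest from the unsaturated $Y$-vertices and let $Y'$ (resp.\ $X'$) be the reachable vertices of $Y$ (resp.\ $X$). Maximality of $M$ forbids an augmenting path, so every reachable $X$-vertex is matched and its partner again lies in $Y'$; a K\"onig-type reachability argument then gives simultaneously a matching that saturates $X'$ into $Y'$ and the containment $N_B(Y') \subseteq X'$. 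Isolated $K_2$'s (those with no neighbor in $N(J)$) carry no head vertex and may simply be absorbed into $C$ or into $R$.

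The step I expect to be the main obstacle is not the correctness of the decomposition, which the expansion argument delivers cleanly, but securing the claimed \emph{linear} running time, since a black-box maximum-matching computation is not linear. Here I would exploit the slack $|J| \ge |N(J)|$ in the same manner as in Lemma~\ref{lem3}: compute a maximal matching greedily and then perform a single alternating-search sweep from the unsaturated side, avoiding repeated augmentation. Verifying that this surrogate for a genuine maximum matching still produces a head $H \subseteq N(J)$ and a crown $C$ meeting all three conditions is the delicate part that the detailed argument must discharge.
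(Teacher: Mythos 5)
First, a point of comparison: the paper itself gives no proof of Lemma~\ref{lem4} --- it is quoted from Prieto and Sloper~\cite{PriSlo2004} --- so your attempt can only be measured against the standard crown machinery that the citation invokes. Your reduction is exactly that expected route: contract each $K_2$ of $J$ to a single node, so that against $X=N(J)$ the contracted $J$ behaves like the independent set of Lemma~\ref{lem3}; find $X'\subseteq X$, $Y'\subseteq Y$ with a matching saturating $X'$ into $Y'$ and $N_B(Y')\subseteq X'$; uncontract. Your verification that the three fat-crown conditions then hold in $G$ (pairwise independence of the $K_2$'s gives both the separation property and the ``at most one matched vertex per component'' condition, by picking for each matched head vertex the adjacent endpoint of its $K_2$) is correct, and with a genuine \emph{maximum} matching the K\"onig-style reachability argument from the unsaturated $Y$-vertices does deliver such $X',Y'$ (with the harmless degenerate case $X'=\emptyset$ when isolated $K_2$'s are present).

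The gap is exactly where you suspect it, and it is real: the lemma claims \emph{linear-time} constructibility, and your proposed surrogate --- a greedy maximal matching followed by a single alternating sweep from the unsaturated $Y$-side --- fails as stated. With a merely maximal matching, the sweep can run along an augmenting path and reach an \emph{unmatched} vertex of $X$, so the reachable head $X'$ is not saturated and the fat-crown matching requirement breaks. A minimal example: $Y=\{y_1,y_2\}$, $X=\{x_1,x_2\}$, edges $y_1x_1$, $x_1y_2$, $y_2x_2$ (i.e., $x_1$ sees both $K_2$'s, $x_2$ only the second); if the greedy phase happens to pick the middle edge $x_1y_2$, the sweep from $y_1$ reaches the unmatched $x_2$, and no crown comes out without augmenting --- but once you augment and re-sweep you are back to a full maximum-matching computation, which is not linear. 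So either the linear-time claim must be argued by a different device (in the paper it is simply inherited from the crown constructions of~\cite{PriSlo2004}), or you must weaken your conclusion to polynomial-time constructibility, which would in fact suffice for the kernelization of Section~2 but is strictly weaker than the statement as given.
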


% Prieto et \emph{al.}~\cite{PriSlo2004} derived a $15k$-Kernel by using essentially double and fat crowns 

\subsection{A Smaller Kernel}
Let $G$ be a graph, and let ${\cal P} = \{L_1, \ldots, L_t\}$ be a
maximal $P_2$-packing in $G$, where each $L_i$ is a subgraph in $G$
that is isomorphic to $P_2$, and $t < k$. Then each connected
component of the graph $G - {\cal P}$ is either a single vertex or a
single edge. Let $Q_0$ be the set of all vertices such that each
vertex in $Q_0$ makes a connected component of $G - {\cal P}$ (each
vertex in $Q_0$ will be called a {\it $Q_0$-vertex}). Let $Q_1$ be
the set of all edges such that each edge in $Q_1$ makes a connected
component of $G - {\cal P}$ (each edge in $Q_1$ will be called a
{\it $Q_1$-edge}).

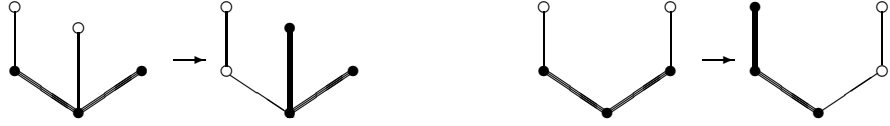
\begin{figure} \begin{center}
\setlength{\unitlength}{.8\unitlength}
\begin{picture}(410,60)
\put(0,20){\circle*{5}} \put(30,0){\circle*{5}}
\put(60,20){\circle*{5}} \put(0,20){\line(3,-2){30}}
\put(0,19){\line(3,-2){30}} \put(0,21){\line(3,-2){30}}
\put(30,0){\line(3,2){30}} \put(30,-1){\line(3,2){30}}
\put(30,1){\line(3,2){30}} \put(0,50){\circle{5}}
\put(30,40){\circle{5}} \put(0,48){\line(0,-1){30}}
\put(30,38){\line(0,-1){40}} \put(75,25){\vector(1,0){15}}

\put(100,20){\circle{5}} \put(130,0){\circle*{5}}
\put(160,20){\circle*{5}} \put(102,18){\line(3,-2){30}}
\put(130,0){\line(3,2){30}} \put(130,-1){\line(3,2){30}}
\put(130,1){\line(3,2){30}} \put(100,50){\circle{5}}
\put(130,40){\circle*{5}} \put(100,48){\line(0,-1){26}}
\put(130,38){\line(0,-1){40}} \put(129,38){\line(0,-1){40}}
\put(131,38){\line(0,-1){40}}

\put(250,20){\circle*{5}} \put(280,0){\circle*{5}}
\put(310,20){\circle*{5}} \put(250,20){\line(3,-2){30}}
\put(250,19){\line(3,-2){30}} \put(250,21){\line(3,-2){30}}
\put(280,0){\line(3,2){30}} \put(280,-1){\line(3,2){30}}
\put(280,1){\line(3,2){30}} \put(250,50){\circle{5}}
\put(310,50){\circle{5}} \put(250,48){\line(0,-1){30}}
\put(310,48){\line(0,-1){30}} \put(325,25){\vector(1,0){15}}

\put(350,20){\circle*{5}} \put(380,0){\circle*{5}}
\put(410,20){\circle{5}} \put(350,20){\line(3,-2){30}}
\put(350,19){\line(3,-2){30}} \put(350,21){\line(3,-2){30}}
\put(380,0){\line(3,2){28}} \put(350,50){\circle*{5}}
\put(410,50){\circle{5}} \put(350,48){\line(0,-1){30}}
\put(349,48){\line(0,-1){30}} \put(351,48){\line(0,-1){30}}
\put(410,48){\line(0,-1){25}}
\end{picture}

\caption{Reducing the number of $Q_0$-vertices} \label{fig1}
\end{center} \vspace*{-6mm}
\end{figure}

\begin{figure} \begin{center}
\setlength{\unitlength}{.8\unitlength}
\begin{picture}(410,60)
\put(0,20){\circle*{5}} \put(30,0){\circle*{5}}
\put(60,20){\circle*{5}} \put(0,20){\line(3,-2){30}}
\put(0,19){\line(3,-2){30}} \put(0,21){\line(3,-2){30}}
\put(30,0){\line(3,2){30}} \put(30,-1){\line(3,2){30}}
\put(30,1){\line(3,2){30}} \put(0,50){\circle{5}}
\put(30,40){\circle{5}} \put(20,50){\circle{5}}
\put(50,40){\circle{5}} \put(0,48){\line(0,-1){30}}
\put(30,38){\line(0,-1){40}} \put(3,50){\line(1,0){15}}
\put(33,40){\line(1,0){15}} \put(75,25){\vector(1,0){15}}

\put(100,20){\circle*{5}} \put(130,0){\circle*{5}}
\put(160,20){\circle{5}} \put(100,20){\line(3,-2){30}}
\put(130,0){\line(3,2){28}} \put(100,50){\circle*{5}}
\put(130,40){\circle*{5}} \put(120,50){\circle*{5}}
\put(150,40){\circle*{5}} \put(100,48){\line(0,-1){30}}
\put(99,48){\line(0,-1){30}} \put(101,48){\line(0,-1){30}}
\put(130,38){\line(0,-1){40}} \put(129,38){\line(0,-1){40}}
\put(131,38){\line(0,-1){40}} \put(103,50){\line(1,0){15}}
\put(103,49){\line(1,0){15}} \put(103,51){\line(1,0){15}}
\put(133,40){\line(1,0){15}} \put(133,39){\line(1,0){15}}
\put(133,41){\line(1,0){15}}

\put(250,20){\circle*{5}} \put(280,0){\circle*{5}}
\put(310,20){\circle*{5}} \put(250,20){\line(3,-2){30}}
\put(250,19){\line(3,-2){30}} \put(250,21){\line(3,-2){30}}
\put(280,0){\line(3,2){30}} \put(280,-1){\line(3,2){30}}
\put(280,1){\line(3,2){30}} \put(250,50){\circle{5}}
\put(270,50){\circle{5}} \put(310,50){\circle{5}}
\put(290,50){\circle{5}} \put(253,50){\line(1,0){15}}
\put(307,50){\line(-1,0){15}} \put(250,48){\line(0,-1){30}}
\put(310,48){\line(0,-1){30}} \put(325,25){\vector(1,0){15}}

\put(350,20){\circle*{5}} \put(380,0){\circle{5}}
\put(410,20){\circle*{5}} \put(350,20){\line(3,-2){27}}
\put(383,0){\line(3,2){27}} \put(350,50){\circle*{5}}
\put(370,50){\circle*{5}} \put(410,50){\circle*{5}}
\put(390,50){\circle*{5}} \put(353,50){\line(1,0){15}}
\put(353,49){\line(1,0){15}} \put(353,51){\line(1,0){15}}
\put(407,50){\line(-1,0){15}} \put(407,49){\line(-1,0){15}}
\put(407,51){\line(-1,0){15}} \put(350,48){\line(0,-1){30}}
\put(349,48){\line(0,-1){30}} \put(351,48){\line(0,-1){30}}
\put(410,48){\line(0,-1){30}} \put(409,48){\line(0,-1){30}}
\put(411,48){\line(0,-1){30}}
\end{picture}
\caption{Reducing the number of $Q_1$-edges} \label{fig2}
\end{center} \vspace*{-6mm}
\end{figure}
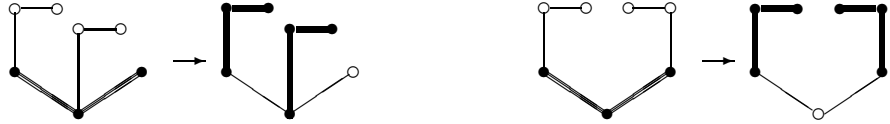

Our kernelization algorithm starts with the following process, which
tries to reduce the number of $Q_0$-vertices and the number of
$Q_1$-edges, by applying the following rules:\footnote{We have used
solid circles and thick lines for vertices and edges, respectively,
in the $P_2$-packing $\cal P$, and hollow circles and thin lines for
vertices and edges not in $\cal P$. In particular, two hollow circles
linked by a thin line represents a $Q_1$-edge.}
\begin{quote}
 {\bf Rule 1.} If a $P_2$-copy $L_i$ in $\cal P$ has two vertices
   that each is adjacent to a different $Q_0$-vertex, then apply
   the processes described in Figure~\ref{fig1} to decrease the number
   of $Q_0$-vertices by 2 (and increase the number of $Q_1$-edges by 1).

 {\bf Rule 2.} If a $P_2$-copy $L_i$ in $\cal P$ has two vertices
   that each is adjacent to a different $Q_1$-edge, then apply the
   processes described in Figure~\ref{fig2} to decrease the number of
   $Q_1$-edges by 2 (and increase the size of the maximal $P_2$-packing
   by 1).
\end{quote}
Note that these rules cannot be applied forever. The number of
consecutive applications of Rule 1 is bounded by $n/2$ since each
application of Rule 1 reduces the number of $Q_0$-vertices by 2; and
the total number of applications of Rule 2 is bounded by $k$ since
each application of Rule 2 increases the number of $P_2$-copies in
the $P_2$-packing by 1. We also remark that during the applications
of these rules, the resulting $P_2$-packing $\cal P$ may become
non-maximal. In this case, we simply first make $\cal P$ maximal
again, using any proper greedy algorithm, before we further apply
the rules.

Therefore, we must reach a point, in polynomial time, where none of
the rules above is applicable. At this point, the maximal
$P_2$-packing $\cal P$ has the following properties:
\begin{quote}
 For each $L_i$ of the $P_2$-copies in $\cal P$,

 {\bf Property 1.} If more than one $Q_0$-vertices are adjacent
  to $L_i$, then all these $Q_0$-vertices must be adjacent
  to the same (and unique) vertex in $L_i$.

 {\bf Property 2.} If more than one vertex in $L_i$ are adjacent
 to $Q_0$-vertices, then all these vertices in $L_i$ must be
 adjacent to the same (and unique) $Q_0$-vertex.

 {\bf Property 3.} If more than one $Q_1$-edges are adjacent
  to $L_i$, then all these $Q_1$-edges must be adjacent
  to the same (and unique) vertex in $L_i$.

 {\bf Property 4.} If more than one vertex in $L_i$ are adjacent
 to $Q_1$-edges, then all these vertices in $L_i$ must be
 adjacent to the same (and unique) $Q_1$-edge.
\end{quote}

% We have the following theorems.

\begin{theorem} \label{thm1}
 Let ${\cal P} = \{L_1, L_2, \ldots, L_t\}$ be a maximal $P_2$-packing
 on which Rules 1-2 are not applicable, where $t \leq k-1$. If the
 number of $Q_0$-vertices is larger than $2k-3$, then there is a
 double crown, %which can be constructed 
constructible in linear time.
\end{theorem}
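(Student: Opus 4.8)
The plan is to reduce everything to Lemma~\ref{lem3}: it suffices to exhibit an independent set $I$ with $|I| \ge 2|N(I)|$, since such a set yields a double crown in linear time with $H \subseteq N(I)$. The natural candidate for $I$ is the set $Q_0$ of $Q_0$-vertices, which is independent (each is a singleton component of $G - \mathcal{P}$) and satisfies $N(Q_0) \subseteq V(\mathcal{P})$ (again because each $Q_0$-vertex is isolated in $G - \mathcal{P}$, so all its neighbours lie on the packing). However, $Q_0$ itself need not satisfy the inequality, so I will first discard a small set of ``expensive'' $Q_0$-vertices.

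First I would analyse, for a fixed $L_i = p_1 p_2 p_3$, the bipartite adjacency pattern between $V(L_i)$ and $Q_0$. Non-applicability of Rule~1 (Properties~1 and~2) says precisely that this bipartite graph contains no two independent edges, hence (being bipartite) is a star. Writing $T_i := V(L_i) \cap N(Q_0)$ and $S_i := \{q \in Q_0 : q \in N(L_i)\}$, this gives a clean dichotomy: either $|T_i| = 1$ (the star is centred at a vertex of $L_i$, while $S_i$ may be large), or $|S_i| = 1$ (the star is centred at a single $Q_0$-vertex $q_i^{\ast}$, while $|T_i|$ may be $2$ or $3$). Classify the packings accordingly: let $A$ be the set of $L_i$ with $|T_i| \ge 2$ (so necessarily $S_i = \{q_i^{\ast}\}$), let $B$ be those with $|T_i| = 1$, and let the remaining packings contribute nothing. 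Since $V(L_i) \cap N(Q_0) = T_i$ and the $V(L_i)$ are pairwise disjoint, $N(Q_0) = \bigcup_i T_i$; the members of $A$ are exactly those that may contribute up to three vertices to $N(Q_0)$ while absorbing only a single $Q_0$-vertex, so these are the expensive packings I want to neutralise.

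Next I set $R := \{q_i^{\ast} : i \in A\}$ and $I := Q_0 \setminus R$, noting $|R| \le |A| =: a$. The point of removing $R$ is that every $Q_0$-neighbour of an $L_i \in A$ is the removed vertex $q_i^{\ast}$, so after deletion no vertex of $I$ is adjacent to any $L_i \in A$; consequently $N(I)$ meets only the $B$-packings, each in at most its one $Q_0$-adjacent vertex, giving $|N(I)| \le |B| =: b$. It then remains to verify the crown inequality. Using the hypothesis $|Q_0| > 2k-3$, i.e.\ $|Q_0| \ge 2k-2$, together with $a + b \le t \le k-1$, I obtain
\[
|I| \;=\; |Q_0| - |R| \;\ge\; (2k-2) - a \;\ge\; (a + 2b) - a \;=\; 2b \;\ge\; 2\,|N(I)|,
\]
where the middle inequality uses $a + 2b \le 2(a+b) \le 2(k-1) = 2k-2$. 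Thus $I$ satisfies the hypothesis of Lemma~\ref{lem3}, which produces the desired double crown in linear time.

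The only genuine subtlety, and the step I would be most careful about, is the structural claim that each $L_i$-to-$Q_0$ adjacency is a star and the resulting dichotomy $|T_i| = 1$ or $|S_i| = 1$; once this is established, the remainder is pure bookkeeping. One should also dispose of isolated vertices by a trivial preprocessing rule so that the constructed crown has a nonempty head $H$; note that $|Q_0| \ge 2k-2 > k-1 \ge a$ forces $b \ge 1$ and hence $N(I) \ne \emptyset$, so the crown is non-degenerate.
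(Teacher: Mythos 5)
Your proof is correct and follows essentially the same route as the paper: your sets $R$ and $I=Q_0\setminus R$ coincide with the paper's $Q_0'$ and $Q_0''$, your class $A$ is its collection ${\cal L}_1$, and the counting chain $|I|\ge 2k-2-a\ge 2b\ge 2|N(I)|$ mirrors the paper's $|Q_0''|\ge 2k-2-d\ge 2(t-d)\ge 2|N(Q_0'')|$ before invoking Lemma~\ref{lem3}. Your explicit derivation of the star structure from the non-applicability of Rule~1, and the remark about isolated vertices ensuring a non-degenerate head, are minor refinements of points the paper leaves implicit, not a different argument.
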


\begin{proof}
We partition the $Q_0$-vertices into two groups: the group $Q_0'$
that consists of all the $Q_0$-vertices such that each $Q_0$-vertex
in $Q_0'$ has at least two different neighbors in a single
$P_2$-copy $L_i$ of $\cal P$; and $Q_0'' = Q_0 - Q_0'$. Without loss
of generality, let ${\cal L}_1 = \{L_1, \ldots, L_d\}$ be the
collection of $P_2$-copies in $\cal P$ such that each $L_i$ in
${\cal L}_1$ has at least two vertices that are adjacent to the same
vertex in $Q_0'$. By Property 2, at most one vertex in $Q_0'$ can be
adjacent to each $P_2$-copy $L_i$ in ${\cal L}_1$. Therefore,
$|Q_0'| \leq |{\cal L}_1| = d$, which also implies that $Q_0''$ is
not empty. Moreover,
  \[  |Q_0''| = |Q_0| - |Q_0'| \geq 2k-2-d \geq 2(k-1-d)
      \geq 2t - 2d.  \]
By property 2 again, no vertex in $Q_0''$ can be adjacent to any
$P_2$-copy $L_i$ in ${\cal L}_1$. Therefore, the neighbors of the
vertices in $Q_0''$ are all contained in the collection ${\cal L}_2
= \{L_{d+1}, \ldots, L_t\}$. By definition, there is at most one
vertex in each $L_i$ in ${\cal L}_2$ that is adjacent to
$Q_0$-vertices. Therefore, the total number $|N(Q_0'')|$ of
neighbors of $Q_0''$ is bounded by $t - d$. Note that $Q_0''$ is an
independent set, and
  \[ |Q_0''| \geq 2t - 2d = 2(t-d) \geq 2 \cdot |N(Q_0'')|  \]
By Lemma~\ref{lem3}, the graph has a double crown that can be
constructed in linear time.
\qed\end{proof}

The proof of the following theorem is quite similar to that of
Theorem~\ref{thm1}. 
%For completeness, we include it for
%verification.

\begin{theorem} \label{thm2}
 Let ${\cal P} = \{L_1, L_2, \ldots, L_t\}$ be a maximal $P_2$-packing
 on which Rules 1-2 are not applicable, where $t \leq k-1$. If the
 number of $Q_1$-edges is larger than $k-1$, then there is a
 fat crown, which can be constructed in linear time.
\end{theorem}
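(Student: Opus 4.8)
The plan is to mirror the proof of Theorem~\ref{thm1}, replacing $Q_0$-vertices by $Q_1$-edges throughout and applying Lemma~\ref{lem4} (the fat crown lemma) in place of Lemma~\ref{lem3}. The key point is that Lemma~\ref{lem4} only requires a collection $J$ of independent $K_2$'s with $|J| \geq |N(J)|$ (a weaker ratio than the factor $2$ needed for a double crown), which matches the fact that $Q_1$-edges come naturally as $K_2$'s and that we are given the smaller bound $|Q_1| > k-1$ rather than $|Q_0| > 2k-3$.

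First I would partition the $Q_1$-edges into two groups analogous to $Q_0'$ and $Q_0''$: let $Q_1'$ be the set of $Q_1$-edges each of which has two neighbors (counting its two endpoints) in a single $P_2$-copy $L_i$, and set $Q_1'' = Q_1 - Q_1'$. I then let $\mathcal{L}_1 = \{L_1,\ldots,L_d\}$ be the $P_2$-copies each having at least two vertices adjacent to the same $Q_1$-edge in $Q_1'$. By Property~4, at most one $Q_1$-edge of $Q_1'$ can be adjacent to each such $L_i$, so $|Q_1'| \leq d$. Since $|Q_1| > k-1 \geq t$, this gives
\[
|Q_1''| = |Q_1| - |Q_1'| \geq t - d = |\mathcal{L}_2|,
\]
where $\mathcal{L}_2 = \{L_{d+1},\ldots,L_t\}$.

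Next, again by Property~4, no edge in $Q_1''$ can be adjacent to any copy in $\mathcal{L}_1$, so all neighbors of $Q_1''$ lie in $\mathcal{L}_2$; and by Property~3 (the analogue used to bound $|N|$) each $L_i$ in $\mathcal{L}_2$ contributes at most one vertex adjacent to $Q_1$-edges, giving $|N(Q_1'')| \leq t-d$. Hence $|Q_1''| \geq t-d \geq |N(Q_1'')|$. Since the edges in $Q_1''$ form a collection of independent $K_2$'s (they are distinct connected components of $G - \mathcal{P}$), Lemma~\ref{lem4} applies and yields a fat crown decomposition constructible in linear time.

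The main obstacle to watch is the bookkeeping on the counting inequality: here the required ratio is only $1{:}1$ rather than $2{:}1$, so I must take care that the bound $|Q_1|>k-1$ together with $t\le k-1$ is exactly enough to force $|Q_1''| \geq |N(Q_1'')|$ after subtracting $|Q_1'| \leq d$, with no factor-of-two slack to spend. I would also double-check that ``adjacent to a $Q_1$-edge'' is interpreted consistently (a $P_2$-copy vertex may be adjacent to one or both endpoints of a $Q_1$-edge) so that Properties 3 and 4 are invoked correctly and the neighborhood $N(Q_1'')$ is counted as a set of vertices in $V(\mathcal{P})$, matching the hypothesis of Lemma~\ref{lem4}.
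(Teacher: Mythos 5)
Your proof is correct and follows essentially the same route as the paper: the identical partition into $Q_1'$ and $Q_1''$, the bound $|Q_1'|\le d$ via Property~4, the chain $|Q_1''|\ge k-1-d\ge t-d\ge |N(Q_1'')|$, and the application of Lemma~\ref{lem4}. The only cosmetic difference is that you credit Property~3 for the bound on $|N(Q_1'')|$, whereas the relevant fact (at most one vertex of each $L_i\in{\cal L}_2$ is adjacent to $Q_1$-edges) really follows from Property~4 together with the definitions of $Q_1'$ and ${\cal L}_1$ — the same terse step the paper labels ``by definition'' — so this does not affect correctness.
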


\begin{proof}
We partition the $Q_1$-edges into two groups: the group $Q_1'$ that
consists of all the $Q_1$-edges such that each $Q_1$-edge in $Q_1'$
has at least two different neighbors in a single $P_2$-copy $L_i$ of
$\cal P$; and $Q_1'' = Q_1 - Q_1'$. Without loss of generality, let
${\cal L}_1 = \{L_1, \ldots, L_d\}$ be the collection of
$P_2$-copies in $\cal P$ such that each $L_i$ in ${\cal L}_1$ has at
least two vertices that are adjacent to the same edge in $Q_1'$. By
Property 4, at most one edge in $Q_1'$ can be adjacent to each
$P_2$-copy $L_i$ in ${\cal L}_1$. Therefore, $|Q_1'| \leq |{\cal
L}_1| = d$, which also implies that $Q_1''$ is not empty. Moreover,
  \[  |Q_1''| = |Q_1| - |Q_1'| \geq k-1-d \geq t - d.  \]
By property 4 again, no edge in $Q_1''$ can be adjacent to any
$P_2$-copy $L_i$ in ${\cal L}_1$. Therefore, the neighbors of the
edges in $Q_1''$ are all contained in the collection ${\cal L}_2 =
\{L_{d+1}, \ldots, L_t\}$. By definition, there is at most one
vertex in each $L_i$ in ${\cal L}_2$ that is adjacent to
$Q_1$-edges. Therefore, the total number $|N(Q_1'')|$ of neighbors
of $Q_1''$ is bounded by $t - d$. Note that $Q_1''$ is a collection
of independent $K_2$'s, and
  \[ |Q_1''| \geq t - d  \geq |N(Q_1'')|  \]
By Lemma~\ref{lem4}, the graph has a fat crown that can be
constructed in linear time.
\qed\end{proof}

Based on all these facts, our kernelization algorithm goes like
this: we start with a maximal $P_2$-packing $\cal P$, and repeatedly
apply Rules 1-2 (and keeping $\cal P$ maximal) until neither Rule 1
nor Rule 2 is applicable. At this point, if the number of
$Q_0$-vertices is larger than $2k-3$, then by Theorem~\ref{thm1}, we
generate a double crown that, by Lemma~\ref{lem1}, leads to a larger
$P_2$-packing. On the other hand, if the number of $Q_1$-edges is
larger than $k-1$, then by Theorem~\ref{thm2}, we generate a fat
crown that, by Lemma~\ref{lem2}, leads to a larger $P_2$-packing. By
repeating this process polynomial many times, either we will end up
with a $P_2$-packing of size at least $k$, or we end up with a
maximal $P_2$-packing $\cal P$ of size less than $k$ on which
neither Rule 1 nor Rule 2 is applicable, the number of
$Q_0$-vertices is bounded by $2k-3$, and the number of $Q_1$-edges
is bounded by $k-1$ (which implies that there are at most $2k-2$
vertices in $Q_1$). The vertices in the sets $Q_0$ and $Q_1$, plus
the at most $3k-3$ vertices in the $P_2$-packing, give a graph of at
most $7k - 8$ vertices.

%%%%%%%%%%%%%%%%%%%%%%%%%%%%%%

\begin{theorem}\label{thm-p2kernel}
{\sc $P_2$-packing} admits a kernel with at most $7k$ vertices. 
\end{theorem}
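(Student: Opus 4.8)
The plan is to wrap up the kernelization algorithm already sketched after Theorem~\ref{thm2} and then simply read off the vertex bound. First I would state the procedure explicitly: starting from any maximal $P_2$-packing ${\cal P}$, exhaustively apply Rules 1--2, each time restoring maximality of ${\cal P}$ by a greedy completion whenever a rule destroys it. Once neither rule applies, Properties 1--4 hold, so I can test the two size thresholds. If $|Q_0| > 2k-3$, I invoke Theorem~\ref{thm1} to obtain a double crown and reduce via Lemma~\ref{lem1}; if the number of $Q_1$-edges exceeds $k-1$, I invoke Theorem~\ref{thm2} to obtain a fat crown and reduce via Lemma~\ref{lem2}. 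After each such reduction I would recompute a maximal packing on the shrunken graph and repeat from the top.

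The correctness of each reduction step is immediate from Lemmas~\ref{lem1} and~\ref{lem2}, which guarantee that deleting $H\cup C$ and decreasing the target by $|H|$ preserves the existence of a $P_2$-packing of the required size; hence the running answer stays faithful to the original instance. The second thing to verify is termination in polynomial time. Here I would use that consecutive applications of Rule 1 are bounded by $n/2$ and those of Rule 2 by $k$ (as already observed above), that restoring maximality is polynomial, and that every crown reduction strictly deletes the nonempty vertex set $H\cup C$, so at most $n$ crown reductions can occur overall. Combining these bounds yields a polynomial iteration count.

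Finally I would read off the kernel size. If the process ever reaches a packing of size at least $k$, we answer \YES{} and return a trivial instance; otherwise the algorithm halts with a maximal packing ${\cal P}$ of size $t\le k-1$ on which Rules 1--2 are inapplicable and for which $|Q_0|\le 2k-3$ and the number of $Q_1$-edges is at most $k-1$. Counting vertices, the packing contributes $3t\le 3k-3$, the set $Q_0$ contributes at most $2k-3$, and $Q_1$ contributes at most $2(k-1)=2k-2$ vertices, for a total of $(3k-3)+(2k-3)+(2k-2)=7k-8\le 7k$ vertices, which is the claimed bound.

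The main obstacle is the termination bookkeeping rather than any new combinatorial content: one must ensure that interleaving Rules 1--2, the re-maximalization step, and the crown reductions cannot loop, and in particular that re-maximalizing after a crown removal does not resurrect previously eliminated $Q_0$-vertices in a way that defeats the monotone decrease in $|V|$. Since each crown reduction strictly shrinks the graph and the rule applications between two successive reductions are individually bounded, this potential argument goes through, and the genuinely heavy lifting has already been done in Theorems~\ref{thm1} and~\ref{thm2}.
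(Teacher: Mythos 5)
Your proposal is correct and follows essentially the same route as the paper: exhaustively apply Rules 1--2 while maintaining maximality, then use Theorems~\ref{thm1} and~\ref{thm2} together with Lemmas~\ref{lem1} and~\ref{lem2} to perform crown reductions, iterate, and finally count $3k-3$ packing vertices plus at most $2k-3$ vertices in $Q_0$ and $2k-2$ vertices in $Q_1$, giving $7k-8\le 7k$. Your added bookkeeping on termination (bounded rule applications between crown reductions, and each crown reduction strictly shrinking the graph) is exactly the argument the paper compresses into ``repeating this process polynomially many times.''
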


We mention here that the (more general results) of H.~Fernau and D.~Manlove~\cite{FerMan06a}
can be improved for the parametric dual (in the sense of the mentioned Gallai-type identity) {\sc total edge cover}:

\begin{theorem}\label{thm-tec-kernel}
 {\sc total edge cover} admits a kernel with at most $1.5k_d$ vertices.
\end{theorem}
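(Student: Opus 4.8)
The plan is to read off the kernel bound directly from the Gallai-type identity stated in the introduction, which asserts that the size $\pi$ of a maximum $P_2$-packing and the size $\tau$ of a minimum total edge cover satisfy $\pi+\tau=n$ (whenever a total edge cover exists at all). Here the dual parameter is $k_d=\tau$, i.e.\ we ask whether $G$ admits a total edge cover of size at most $k_d$, and the identity is precisely what links this question back to packing.

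First I would dispose of the degenerate cases. A total edge cover can exist only if every connected component of $G$ has at least three vertices: an isolated vertex cannot be covered, and a component that is a single edge forces a one-edge component in $G[EC]$, contradicting totality. Both conditions are testable in linear time; if either fails, the instance is a trivial \NO-instance and we output a constant-size \NO.

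Once this is ensured, the identity $\tau=n-\pi$ applies. Since the $P_2$'s of a packing are vertex-disjoint and each uses three vertices, $\pi\le n/3$, whence
$$\tau=n-\pi\ge n-\tfrac{n}{3}=\tfrac{2}{3}n,\qquad\text{equivalently}\qquad n\le\tfrac{3}{2}\tau=1.5\,\tau.$$
(If $G$ is disconnected, the identity and this inequality hold component-wise and sum up, so connectedness need not be assumed.) Consequently, if $n>1.5\,k_d$ then $\tau\ge\tfrac{2}{3}n>k_d$, so no total edge cover of size at most $k_d$ exists and we may reject; and if $n\le 1.5\,k_d$ the input already has at most $1.5\,k_d$ vertices and we return it unchanged.

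The only real subtlety is the first step: the Gallai-type identity presupposes that a total edge cover exists, so the small-component check is not cosmetic but exactly what makes the identity---and hence the clean bound $\tau\ge\tfrac{2}{3}n$---valid. After that the kernel bound is immediate, and because the packing estimate $\pi\le n/3$ is tight on disjoint copies of $P_2$, the constant $1.5$ cannot be improved by this argument.
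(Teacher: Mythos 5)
Your proof is correct, but it takes a different route than the paper. The paper's own argument is a one-line direct count: in a total edge cover every component of $G[EC]$ has at least two edges, so a component with $c\ge 2$ edges spans at most $c+1\le 1.5c$ vertices; hence $k$ edges can cover at most $1.5k$ vertices (the extremal case being a $P_2$-packing), and if $n>1.5k_d$ one rejects outright --- no appeal to the Gallai-type identity and no need to check whether a total edge cover exists at all, since ``no total edge cover of size at most $k_d$'' subsumes ``no total edge cover whatsoever.'' You instead go through the identity $\pi+\tau=n$ combined with the trivial packing bound $\pi\le n/3$ to get $\tau\ge\frac{2}{3}n$, which forces you to first dispose of isolated vertices and single-edge components so that $\tau$ is even defined; you handle this correctly, and the resulting kernelization (reject if $n>1.5k_d$, otherwise return the instance unchanged) is the same as the paper's. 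What your route buys is that it makes the parametric duality explicit and reuses the identity already stated in the introduction; what the paper's route buys is self-containedness and brevity --- it needs neither the (nontrivial, cited) identity nor the degenerate-case preprocessing. Both arguments are sound and give the same $1.5k_d$ bound.
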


\begin{proof}
 Since we aim at a total edge cover, the largest number of vertices that can be covered by $k$ edges is $1.5k$ (namely,
if the edge cover is a $P_2$-packing). Hence, if the graph contains more than $1.5k$ vertices, we can reject.
This leaves us with a kernel with at most $1.5k$ vertices.
\qed\end{proof}

% This also allows us to state lower bounds for the kernel sizes, based on works of J.~Chen \emph{et al.} 
% \cite{Cheetal2007}:

\begin{corollary}
Trivially, {\sc $P_2$-packing} does not admit a kernel with less than $3k$ vertices.
  {\sc total edge cover}  does not admit a kernel with less than $\alpha_ak_d$ vertices for any
$\alpha_d<(7/6)$, unless $\Pol=\NP$.
\end{corollary}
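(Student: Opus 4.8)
The plan is to treat the two assertions separately, since the first is information-theoretic and unconditional while the second must invoke the NP-hardness of the packing problem.

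For {\sc $P_2$-packing} the bound is immediate, which is why the statement calls it trivial. Any $P_2$-packing of size $k$ occupies exactly $3k$ vertices, so every YES-instance of the parameter-$k$ problem contains at least $3k$ vertices. A kernelization sends YES-instances to YES-instances without enlarging the parameter; hence the reduced instance must still accommodate a packing of size $k$ and so cannot have fewer than $3k$ vertices. No complexity assumption is needed. I would also note that this is exactly the reverse reading of the coverability ceiling used for the upper bounds: through the Gallai-type identity $\nu+\rho=n$ (maximum $P_2$-packing plus minimum total edge cover), the same $3k$ floor for the packing reappears as the $\tfrac32 k_d$ coverability ceiling for {\sc total edge cover}.

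For {\sc total edge cover} I would argue by reduction from an NP-hard restriction of {\sc $P_2$-packing}, passing to the dual parameter through the identity. Concretely I would start from the NP-hard problem of deciding whether a graph on $n=3m$ vertices partitions into $m$ copies of $P_2$ (a perfect packing). By the identity, ``maximum packing $\ge k$'' on $G$ is the same as ``minimum total edge cover $\le n-k$'' on $G$, exchanging the primal parameter $k$ for the dual parameter $k_d=n-k$. To drive the density to the target value I would pad the hard core with disjoint low-density but coverable gadgets; stars $K_{1,s}$ are convenient, since a star contributes only a single $P_2$ to the packing yet all $s$ of its edges must be used to cover its leaves, so that adjusting $s$ and the number of stars moves the ratio $n/k_d$ continuously. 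The aim is to land the constructed instances at $n/k_d=7/6$ and to arrange matters so that a kernelization guaranteeing strictly fewer than $\tfrac76 k_d$ vertices would, composed with the polynomial-time reduction, decide the perfect-packing question in polynomial time and force $\Pol=\NP$.

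The hard part is precisely pinning the constant at $7/6$ together with a valid conditional mechanism, and I expect this to be the main obstacle. The coverability ceiling already supplies the matching upper density $n\le\tfrac32 k_d$, so the threshold lies strictly inside the nontrivial window $(1,\tfrac32]$; the difficulty is that {\sc total edge cover} appears to remain NP-hard essentially throughout this window — the star padding above seems to witness hard instances at every ratio in $(1,\tfrac32]$ — so one cannot simply assert that sub-$\tfrac76$-density instances are polynomial-time solvable. The real burden is therefore an incompressibility claim: that a sub-$\tfrac76 k_d$ kernel would have to squeeze the tight perfect-packing instances (which enter at density $\tfrac32$) past the point where the residual packing question becomes polynomially decidable, i.e.\ that the slack between $\tfrac32$ and $\tfrac76$ cannot be eliminated in polynomial time unless $\Pol=\NP$. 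Making this quantitative, so that the threshold comes out at exactly $7/6$ rather than at the coverability bound $\tfrac32$, is the crux, and I would expect to follow the parametric-dual lower-bound methodology of Fernau and Manlove, specialized to the sharper packing/covering densities of this paper; the gap between the $7/6$ lower bound and the $\tfrac32$ upper bound would then be left open. Throughout, isolated vertices and isolated edges (which admit no total edge cover) must be excluded so that the Gallai identity is available.
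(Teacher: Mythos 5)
Your treatment of the first assertion is fine and matches the paper: a $P_2$-packing of size $k$ already occupies $3k$ vertices, so no kernel can go below that; nothing conditional is needed.

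For {\sc total edge cover}, however, there is a genuine gap, and you name it yourself: you never supply the mechanism that pins the constant at $7/6$, and your sketch explicitly leaves that ``crux'' open. The paper does not build any bespoke reduction, star padding, or density-threshold argument. Its proof is a one-liner: combine the $7k$-kernel for {\sc $P_2$-packing} just established (Theorem~\ref{thm-p2kernel}) with the parametric-duality kernel lower bound of Chen, Fernau, Kanj and Xia~\cite[Theorem~3.1]{Cheetal2007}. That theorem says, for a primal/dual pair linked by a Gallai-type identity $k+k_d=n$ (here maximum $P_2$-packing and minimum total edge cover), that one cannot simultaneously have a primal kernel with $\alpha k$ vertices and a dual kernel with $\alpha_d k_d$ vertices when $(\alpha-1)(\alpha_d-1)<1$, unless $\Pol=\NP$. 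Plugging in $\alpha=7$ gives exactly $\alpha_d\geq 7/6$. This also explains why your intended route cannot work as described: as you yourself observe, {\sc total edge cover} remains \NP-hard at densities throughout $(1,3/2]$, so no argument of the form ``instances of density below $7/6$ are polynomially solvable'' is available; the value $7/6$ is not an intrinsic density threshold of {\sc total edge cover} but an artifact of the interplay between a hypothetical dual kernel and the $7k$ primal kernel of this paper. (Your pointer to the ``methodology of Fernau and Manlove'' is also off target; the needed tool is the duality lower-bound theorem of~\cite{Cheetal2007}.) Without invoking that theorem, or reproving its composition-of-kernels argument, your proposal does not establish the second assertion.
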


\begin{proof}
A $P_2$-packing of size $k$ is only possible in a graph with at least $3k$ vertices.
Due to Theorem~\ref{thm-p2kernel} and \cite[Theorem~3.1]{Cheetal2007},
there does not exist a kernel of size $\alpha_dk_d$ for   {\sc total edge cover} 
under the assumption that $\Pol=\NP$ if $(7-1)(\alpha_d-1)<1$.
 \qed
\end{proof}

\section{Combinatorial Properties of $P_2$-packings}

We consider the following setting. Let $\p$ be a maximal $P_2$-packing of size $j$ of a given graph $G=(V,E)$.
We will argue in this section that, whenever a $P_2$-packing of size $(j+1)$ exists, then there is also one,
called $\q$, that uses at least $2.5j$ out of the $3j$ vertices of $\p$. This combinatorial property of $\q$ 
(among others) will be used in the next section by the inductive step of our algorithm for {\sc $P_2$-packing}.
We employ extremal combinatorial arguments to achieve our results, deriving more and more properties that
$\q$ could possess, without risking to miss any $P_2$-packing of size $(j+1)$. 

So, among all $P_2$-packings of size $(j+1)$, we will 
consider those packings ${\cal Q}$  that maximize 
\begin{equation}\label{prop1}
%|{\cal P}\cap {\cal Q}|
\sum_{p \in \p}\sum_{q \in \q} 1_{[E(p)=E(q)]},
\end{equation} where $1_{[ \ ]}$ is the indicator function.
We call these $\qq_{(1)}$. In $\qq_{(1)}$ we find those packings from $\q$ who 'reuse' the maximum number of $P_2$'s in $\p$. From Liu et \emph{al.}~\cite{Liuetal2006} we know:
\begin{lemma}
 $|V(p) \cap V(\q)|\ge 2$ for any $p \in \p$ and $\q \in \qq_{(1)}$.
\end{lemma}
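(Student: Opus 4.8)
The plan is to argue by contradiction. Suppose some $p\in\p$ and $\q\in\qq_{(1)}$ satisfy $|V(p)\cap V(\q)|\le 1$. I will manufacture a $P_2$-packing $\q'$ of size $(j+1)$ whose reuse count $\sum_{p'\in\p}\sum_{q\in\q'}1_{[E(p')=E(q)]}$ strictly exceeds that of $\q$, contradicting the defining maximality \eqref{prop1} of $\qq_{(1)}$. The driving observation is that, since at most one of the three vertices of $p$ lies in $V(\q)$, at most one path of $\q$ can meet $p$; hence $p$ can be inserted into the packing after deleting at most one path. Note that this argument uses only the vertex-disjointness of $\p$, not its maximality.

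Concretely, let $\q^{\ast}$ be the set of paths in $\q$ that contain a vertex of $p$. Because $|V(p)\cap V(\q)|\le 1$ and the paths of $\q$ are vertex-disjoint, $|\q^{\ast}|\le 1$. Set $\q'=(\q\setminus\q^{\ast})\cup\{p\}$. Every vertex of $p$ that could be shared with $\q$ was covered only by the (unique, if any) path in $\q^{\ast}$, which has been removed, so $p$ is vertex-disjoint from $\q\setminus\q^{\ast}$, and moreover $p$ has at least two vertices outside $V(\q)$ so it is not already one of those paths; thus $\q'$ is a legal $P_2$-packing of distinct paths. If $\q^{\ast}=\es$, then $\q'$ has size $(j+2)$, and I instead delete from it any single path that is not reused --- one exists because at most $j$ of the $(j+1)$ paths of $\q$ can coincide with a path of $\p$ --- leaving a packing of size $(j+1)$ still containing $p$. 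In either case I end with a size-$(j+1)$ packing containing $p$ as one of its paths.

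The heart of the matter --- and the step I expect to be the only delicate one --- is verifying that the reuse count genuinely went up. First, $p$ is a fresh reuse: since $|V(p)\cap V(\q)|\le 1<3$, no path of $\q$ has edge set $E(p)$, so $p$ contributes $0$ to the count for $\q$ but $1$ in $\q'$. When $\q^{\ast}=\es$ the deleted path was chosen non-reused, so nothing is lost and the count rises by exactly $1$. When $\q^{\ast}=\{q^{\ast}\}$ the only possible loss is if $q^{\ast}$ were itself reused, i.e. $E(q^{\ast})=E(p')$ for some $p'\in\p$. But then the vertex of $p$ lying in $V(q^{\ast})$ also lies in $V(p')$; as $\p$ is a vertex-disjoint packing, a common vertex forces $p'=p$, whence $V(q^{\ast})=V(p)$, contradicting $|V(p)\cap V(\q)|\le 1$. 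Thus $q^{\ast}$ is not reused, no reuse is lost, and again the count strictly increases. Either way $\q'$ beats $\q$, the desired contradiction, and therefore $|V(p)\cap V(\q)|\ge 2$.
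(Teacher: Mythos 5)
Your proof is correct and follows essentially the same exchange argument as the paper: replace the (at most one) path of $\q$ meeting $p$ by $p$ itself, or, if none meets $p$, swap $p$ for an arbitrary non-reused path guaranteed by pigeonhole. The only difference is that you explicitly verify the removed path cannot itself have been a reused copy of some $p'\in\p$ (via vertex-disjointness of $\p$), a small detail the paper leaves implicit.
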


%  that . This can be seen the following way:
\begin{proof}
If there is $p\in{\cal P}$ with $|V(p)\cap V({\cal Q})|=1$, then replace the intersecting path of $\q$ by $p$ .
In the case where $|V(p) \cap V({\cal Q})|=0$, simply replace an arbitrary $q \in \q \setminus \p$, that must exist by pigeon-hole, by $p$.
In both cases, we obtain a packing $\q'$ of the same size as $\q$, but $|{\cal P}\cap {\cal Q}'|=|{\cal P}\cap {\cal Q}|+1$, contradicting %the assumption %that  
$\q \in \qq_{(1)}$.
%
%But this is a contradiction to the choice of $\q$ since  $|{\cal P}\cap {\cal Q}'|=|{\cal P}\cap {\cal Q}|+1$. 
%In the case where $|p\cap V({\cal Q})|=0$, simply replace any $q \in \q \setminus \p$, that must exist by pigeon-hole, by $p$.
\qed\end{proof}
 A slightly sharper version is the next assertion:
\begin{corollary}\label{2pfade}
% For any $p \in \p$ 
If $\q\in\qq_{(1)}$, then 
for any $p \in \p$ with $p \not \in \q$, 
there are $q_1,q_2 \in \q$ with $|V(p) \cap V(q_i)|\ge 1$ ($i=1,2$).
\end{corollary}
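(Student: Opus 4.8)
The plan is to prove the contrapositive, leveraging the preceding lemma which guarantees $|V(p)\cap V(\q)|\ge 2$ for every $p\in\p$. Suppose, for contradiction, that for some $p\in\p$ with $p\notin\q$ there do \emph{not} exist two distinct paths $q_1,q_2\in\q$ each meeting $p$. Combined with the lemma, this means all of the (at least two) vertices of $p$ that lie in $V(\q)$ are covered by a \emph{single} path $q\in\q$. I would first record this reduction precisely: the only way to avoid having two intersecting paths while still satisfying $|V(p)\cap V(\q)|\ge 2$ is that exactly one $q\in\q$ accounts for the intersection, with $|V(p)\cap V(q)|\ge 2$.

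Next I would exploit that $|V(p)\cap V(q)|\ge 2$ and that both $p$ and $q$ are $P_2$'s (three vertices each). Since $p\notin\q$, we have $p\neq q$ as paths, yet they share at least two of their three vertices. The key structural observation is that $p$ then has at most one vertex \emph{outside} $V(q)$; in particular $p$ contributes essentially nothing new beyond $q$. I would then argue that this lets us construct a competing packing $\q'$ of the same size that reuses strictly more paths of $\p$, contradicting $\q\in\qq_{(1)}$. Concretely, I would consider swapping $q$ out for $p$: because $p$ and $q$ occupy almost the same vertex set (they differ in at most one vertex on each side), replacing $q$ by $p$ keeps the packing vertex-disjoint and of the same cardinality, while turning a reused-or-not path into the exact path $p\in\p$, thereby increasing the count in~(\ref{prop1}) by one.

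The step I expect to be the main obstacle is verifying that the swap genuinely increases the objective in~(\ref{prop1}) rather than merely preserving it, and that it does not destroy vertex-disjointness. I would need to track cases according to how many vertices $p$ and $q$ share (two versus three) and whether the third vertex of $p$ (if distinct from $q$'s vertices) is free in $\q$ or already used by another path. If that third vertex is used elsewhere, a naive swap would collide, so the argument must show that even then one can reassign so as to gain a reused path; this is the same flavor of extremal/exchange reasoning used in the proof of the preceding lemma. The cleanest route is probably to observe that since $|V(p)\cap V(q)|\ge 2$ and $p\neq q$, replacing $q$ by $p$ frees at most one vertex and occupies at most one new vertex, and that this local modification can always be completed to a valid packing $\q'$ of the same size with $|\p\cap\q'|>|\p\cap\q|$, contradicting maximality of~(\ref{prop1}) and hence establishing the corollary.
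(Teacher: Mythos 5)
Your overall strategy is exactly the paper's: reduce, via the preceding lemma, to the situation where a single $q\in\q$ carries the whole intersection with $p$ (so $|V(p)\cap V(q)|\ge 2$), and then exchange $q$ for $p$ to contradict maximality of priority~(\ref{prop1}). But the point you flag as ``the main obstacle'' --- that the third vertex of $p$ might already be occupied by some other path of $\q$, so that a ``reassignment'' would be needed --- cannot occur, and your fallback justification (that the swap ``frees at most one vertex and occupies at most one new vertex'') is not in itself a reason the swap stays vertex-disjoint. The correct and much simpler observation is that, under the contradiction hypothesis, $q$ is the \emph{only} path of $\q$ intersecting $p$; hence every vertex of $p$ outside $V(q)$ lies outside $V(\q)$ altogether, and $\q':=(\q\setminus\{q\})\cup\{p\}$ is automatically a vertex-disjoint packing of the same size. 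Any other path of $\q$ containing a vertex of $p$ would itself be the second intersecting path whose existence the corollary asserts, so the collision case you worry about is vacuous.

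One further small point should be made explicit so that the count in~(\ref{prop1}) \emph{strictly} increases: the removed path $q$ cannot coincide (as an edge set) with any member of $\p$. Indeed, if $E(q)=E(p')$ for some $p'\in\p$, then $V(p')=V(q)$ meets $V(p)$ in at least two vertices, so vertex-disjointness of $\p$ forces $p'=p$, i.e.\ $p\in\q$, contradicting the hypothesis $p\notin\q$. With these two observations filled in, your exchange argument is precisely the paper's proof.
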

\begin{proof}
Suppose it exists $p \in \p$ and only one $q \in \q$ with $|V(p) \cap V(q)| \ge 2$.
Then $\q \setminus \{q\} \cup\{p\}$ improves on priority~(\ref{prop1}), contradicting  $\q\in\qq_{(1)}$.
\qed\end{proof}
 Furthermore, from the set $\qq_{(1)}$ we only collect those $P_2$-packings $\q'$, which maximize the following second property:\
\begin{equation}\label{prop2}
%\sum_{p\in\mc{P}}\sum_{q\in\mc{Q'}}|E(G[p\cap q])|.
\sum_{p\in\mc{P}}\sum_{q\in\mc{Q'}}|E(p) \cap E(q)|.
\end{equation}
The set of the remaining $P_2$-packings will be called \defines{$\qq_{(2)}$}.
So, in $\qq_{(2)}$ are those packings who cover the maximum number of edges in $E(\p)$.
%%%%%%%%%%%%%%%%%%%%%%%%%%%%%%%%%%%%%%%%%%%5
%We will further distinguish the packings in $\q_{(2)}$ by considering only those minimizing 
%\begin{equation}\label{prop3}
%|\mc{P}_2(\mc{Q})|, \mbox{ where }
%\mc{P}_i(\mc{Q})=\{p\in\mc{P}\mid i=|p\cap V(\mc{Q})|\}. 
%\end{equation}
%This packings are referred to as $\q_{(3)}$.
%%%%%%%%%%%%%%%%%%%%%%%%%%%%%%%%%%%%%%%%%%%%%%%%5
We define $\mc{P}_i(\mc{Q}):=\{p\in\mc{P}\mid i=|p\cap V(\mc{Q})|\}$.
A vertex $v \in V$ is a \defines{$\q$-endpoint} if there is a unique $q=q_1\ldots q_3 \in Q$ such that $v=q_1$ or $v=q_3$.
A vertex $v$ is called \defines{$\q$-midpoint} if there is a $q=q_1q_2q_3 \in \q$ with $q_2=v$.
\begin{definition}\
\begin{enumerate}
\item We call $q=q_1q_2q_3 \in \q$ \defines{foldable} on $p=p_1 p_2 p_3 \in \p$ if, for $q_2 \in V(p) \cap V(q)$, we have $p_s=q_2$, $s \in \{1,2,3\}$,  and either $p_{s+1}\not\in V(\q)$ or  $p_{s-1} \not\in V(\q)$, see Figure~\ref{foldfig1}.
\end{enumerate}
\begin{figure}
\centering
\psfrag{ar}{$\leadsto$}
\psfrag{q1}{$q_1$}
\psfrag{p1}{$p_1$}
\psfrag{q3}{$q_3$}
\subfigure[$q$ is foldable on~$p$.]{\label{foldfig1}\includegraphics[scale=0.85]{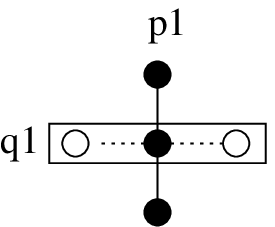}\hspace{3ex}}\hspace{3ex}
\subfigure[$(q_1,p_1)$-folding.]{\label{foldfig2}\includegraphics[scale=0.85]{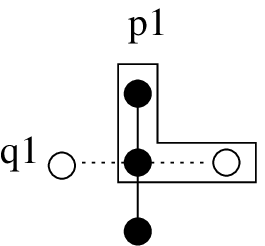}\hspace{3ex}}\hspace{3ex}
\subfigure[$q$ is shiftable  on~$p$.]{\label{shiftfig1}\includegraphics[scale=0.85]{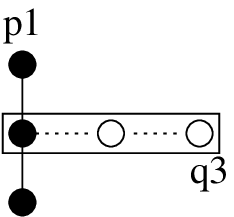}\hspace{3ex}}\hspace{3ex}
\subfigure[$(q_3,p_1)$-shifting.]{\label{shiftfig2}\includegraphics[scale=0.85]{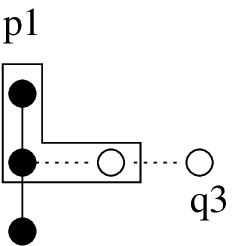}\hspace{3ex}}
\caption{The black vertices and solid edges indicate the $P_2$-packing $\p$. The polygons contain the $P_2$'s of the packing $\q$.}
\end{figure}
\begin{enumerate}
\item[2.] If $q$ is foldable on $p$, then substituting $q$ by $q \setminus \{q_i\} \cup\{ p_{s\pm 1}\}$ with $ i \in \{1,3\}$, will be called \defines{$(q_i,p_{s\pm 1})$-folding}, see Figure~\ref{foldfig2}.

\item[3.] We call $q=q_1q_2q_{3} \in \q$ \defines{shiftable} with respect to $q_1$ ($q_3$, resp.) on $p=p_1 p_2 p_3\in \p$ if the following holds: $q_1\in  V(p) \cap V(q)$  ( $q_3 \in V(p) \cap V(q)$, resp.) and either $p_{s+1}\not\in V(\q)$ or $p_{s-1} \not \in V(\q)$ where $p_s =q_1$ ($p_s = q_3$, resp.) and $s \in \{1,2,3\}$, see Figure~\ref{shiftfig1}.
\item[4.] If $q$ is shiftable on $p$ with respect to $t \in \{q_1,q_3\}$, then substituting $q$ by $q \setminus \{g\} \cup \{p_{s+1}\}$ (or by  $q \setminus \{g\} \cup \{p_{s-1}\}$, resp.), $g \in \{q_1,q_3\} \setminus \{t \}$, will be called \defines{$(g,p_{s+1})$-shifting} (\defines{$(g,p_{s-1})$-shifting}, resp.), see Figure~\ref{shiftfig2}.
\end{enumerate}
\end{definition}

% For $P_2$-packings $\q \in \qq_{(2)}$ the following holds:
\begin{lemma}\label{shiftable}
If $q=q_1q_2q_3 \in \q$ with $\q \in \qq_{(2)}$ is shiftable on $p \in \p$ with respect to $q_1$ (or $q_{3}$, resp.), then there is some $p' \in \p$ with $p' \neq p$ such that $\{q_3, q_2\} \in E(p)$ (or $\{q_2, q_1\} \in E(p)$, resp.).
\end{lemma}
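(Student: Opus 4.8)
The plan is to argue by contradiction, using the two optimality conditions built into $\qq_{(2)}$ and taking the shifting operation as the local exchange. I treat only shiftability with respect to $q_1$; the $q_3$-case is symmetric under interchanging the two endpoints of $q$ (and then the conclusion concerns $\{q_1,q_2\}$ instead of $\{q_2,q_3\}$). Writing $q_1=p_s$ and using that $\q$ is a packing, I may assume—up to reversing the labelling of $p$—that the witnessing unused neighbour is $p_{s+1}$, so $p_{s+1}\notin V(\q)$ and $\{p_s,p_{s+1}\}\in E(p)$. The shift then replaces $q$ by the genuine $P_2$ path $q'=p_{s+1}\,q_1\,q_2$ (with $q_1$ as its midpoint), yielding $\q'=(\q\setminus\{q\})\cup\{q'\}$. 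Since $p_{s+1}\notin V(\q)$ while $q_1,q_2$ stay, $\q'$ is again a vertex-disjoint $P_2$-packing of size $j+1$.

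First I record the exchange at the level of edges: $q'$ keeps $\{q_1,q_2\}$, drops $\{q_2,q_3\}$, and gains $\{p_s,p_{s+1}\}$. Because a packing is edge-disjoint, each edge lies in at most one of its paths, so I read priority~(\ref{prop1}) as $|\p\cap\q|$ (the number of reused $P_2$'s) and priority~(\ref{prop2}) as $|E(\p)\cap E(\q)|$ (the number of reused edges). Here $\{p_s,p_{s+1}\}\in E(p)\subseteq E(\p)$, whereas $p_{s+1}\notin V(\q)$ forces $\{p_s,p_{s+1}\}\notin E(\q)$; thus the shift manufactures one genuinely new shared edge.

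The crucial preliminary step is to exclude $\{q_2,q_3\}\in E(p)$. If it held, then $q_2,q_3\in V(p)$ together with $q_1=p_s$ would give $V(q)=V(p)$, and $\{q_2,q_3\}$ being an edge of the three-vertex path $p$ forces $q_1$ to be an endpoint of $p$; its unique neighbour on $p$ is then $p_2\in\{q_2,q_3\}\subseteq V(\q)$, so $q$ could not be shiftable with respect to $q_1$—a contradiction. The same observation shows $q\notin\p$: otherwise disjointness of $\p$ and $q_1=p_s\in V(p)$ would give $q=p$, which is not shiftable since $V(p)\subseteq V(\q)$.

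Now suppose, for contradiction, that no $p'\neq p$ satisfies $\{q_2,q_3\}\in E(p')$. Combined with the previous step this gives $\{q_2,q_3\}\notin E(\p)$, so passing from $\q$ to $\q'$ loses no shared edge yet gains $\{p_s,p_{s+1}\}$, whence $|E(\p)\cap E(\q')|=|E(\p)\cap E(\q)|+1$. For the first priority, removing $q\notin\p$ destroys no reused $P_2$, so $|\p\cap\q'|\ge|\p\cap\q|$. If this inequality is strict then $\q\notin\qq_{(1)}$, a contradiction; otherwise $\q'$ stays inside $\qq_{(1)}$ and the strict gain in shared edges contradicts $\q\in\qq_{(2)}$. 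Either way we obtain a contradiction, so the required $p'$ exists. I expect the main obstacle to be exactly the preliminary case analysis ruling out $\{q_2,q_3\}\in E(p)$: without it the shift would merely break even on reused edges, and the whole argument rests on shiftability forbidding $q$ from lying entirely on $p$.
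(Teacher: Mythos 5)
Your proof is correct and follows essentially the same route as the paper: assume the conclusion fails, perform the $(q_3,p_{s+1})$-shifting, and observe that priority~(\ref{prop1}) cannot drop while priority~(\ref{prop2}) strictly increases because the edge $\{p_s,p_{s+1}\}$ is gained, contradicting $\q\in\qq_{(2)}$. Your extra care (explicitly ruling out $\{q_2,q_3\}\in E(p)$ and $q\in\p$) only fills in details the paper sidesteps by restricting to the case $V(p)\cap V(q)=\{q_1\}$, which is all it needs for the later application.
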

\begin{proof}We examine the case where $V(p) \cap V(q)=\{q_1\}$ and, w.l.o.g., $p_{s+1} \not \in V(\q)$. Now assume the contrary. Then by $(q_3,p_{s+1})$-shifting, we obtain a $P_2$-packing $\q'$. Comparing $\q$ and $\q'$ with respect to priority~\ref{prop1},  $\q'$ is no worse than $\q$. But $\q'$ improves on priority~\ref{prop2}, as we gain $\{p_s,p_{s+1}\}$. But this contradicts $\q \in \qq_{(2)}$.
The case for $V(p)\cap V(q)=\{q_{3}\}$  follows analogously.
\qed\end{proof}

\begin{lemma}\label{nofold}
If $\q \in \qq_{(2)}$, then no $q \in \q$ is  foldable.
\end{lemma}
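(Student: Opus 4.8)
The plan is to argue by contradiction. Suppose some $q=q_1q_2q_3\in\q$ is foldable on some $p=p_1p_2p_3\in\p$, so that $q_2=p_s$ and, after possibly swapping the two neighbours, there is a free neighbour $p'$ of $q_2$ on $p$ (one of $p_{s\pm1}$) with $p'\notin V(\q)$. I will perform a single folding, producing a packing $\q'$ of the same size, and show that $\q'$ is no worse than $\q$ under priority~(\ref{prop1}) and strictly better under priority~(\ref{prop2}). Since $\q\in\qq_{(2)}\subseteq\qq_{(1)}$, this is a contradiction: if $\q'$ were strictly better under~(\ref{prop1}) it would contradict $\q\in\qq_{(1)}$, and if it tied under~(\ref{prop1}) then $\q'\in\qq_{(1)}$ and the strict gain under~(\ref{prop2}) contradicts $\q\in\qq_{(2)}$.

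First I would record two bookkeeping facts. Since $\p$ consists of pairwise vertex-disjoint paths, every edge lies on at most one member of $\p$; in particular the edge $\{q_2,p'\}=\{p_s,p_{s\pm1}\}$ lies on exactly one member, namely $p$, and so contributes $1$ to the inner sum of~(\ref{prop2}). Secondly, $q$ cannot coincide with any member of $\p$: if $E(q)=E(p'')$ then $q_2=p_s\in V(p'')$, so $p''=p$ by disjointness, forcing $q=p$; but $q=p$ is not foldable, because then $q_2=p_2$ and both neighbours $p_1,p_3$ of $p_2$ lie in $V(q)\subseteq V(\q)$, leaving no free neighbour. Hence $q$ contributes $0$ to the count in~(\ref{prop1}), so replacing $q$ cannot decrease that count.

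The crucial step is to exhibit an endpoint $q_i$ ($i\in\{1,3\}$) for which $\{q_2,q_i\}$ is not an edge of any member of $\p$; I then fold that endpoint away, i.e.\ replace $q$ by $q\setminus\{q_i\}\cup\{p'\}$. Both $\{q_1,q_2\}$ and $\{q_2,q_3\}$ are incident to $q_2=p_s$, and any $\p$-edge incident to $p_s$ belongs to $p$. If $s\in\{1,3\}$ then $p_s$ carries only one $\p$-edge, so the two distinct edges at $q_2$ cannot both be $\p$-edges. If $s=2$, the only way both are $\p$-edges is that $\{q_1,q_2\}$ and $\{q_2,q_3\}$ equal $\{p_1,p_2\}$ and $\{p_2,p_3\}$ in some order, which again forces $q=p$ and contradicts foldability. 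So the desired $q_i$ exists.

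Finally I would evaluate the priorities for $\q'=(\q\setminus\{q\})\cup\{q\setminus\{q_i\}\cup\{p'\}\}$, a valid $P_2$-packing of the same size because $p'\notin V(\q)$ keeps it vertex-disjoint. Only the single replaced path changes the sums: the edge from $q_2$ to the surviving endpoint is unchanged, the removed edge $\{q_2,q_i\}$ is not a $\p$-edge, and the new edge $\{q_2,p'\}$ is a $\p$-edge of $p$, so the sum in~(\ref{prop2}) increases by exactly $1$. Combined with the fact that the count in~(\ref{prop1}) does not decrease, this yields the contradiction announced above, and therefore no $q\in\q$ is foldable. The only genuinely delicate point is the case $s=2$: there one must rule out $q=p$ in order to both guarantee the usable endpoint and certify that folding does not lose on priority~(\ref{prop1}); everything else is routine bookkeeping with the two priorities.
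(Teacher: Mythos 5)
Your proof is correct and takes essentially the same route as the paper: perform a single folding that strictly increases priority~(\ref{prop2}) while not decreasing priority~(\ref{prop1}), contradicting $\q \in \qq_{(2)}$. You merely flesh out the paper's terse ``w.l.o.g.''\ by verifying that $q$ cannot coincide with a path of $\p$ and that at least one of the two edges at $q_2$ is not a $\p$-edge, so the right endpoint to fold away always exists; this is a careful elaboration, not a different argument.
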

\begin{proof}
Suppose some  $q \in \q$ is foldable on $p$ and, w.l.o.g., $p_{s+1} \not\in V(\q)$ and $q_1 \not \in V(\p)$. Then by $(q_1,p_{s+1})$-folding $q$ we could improve on priority~\ref{prop2}, contradicting $\q \in \qq_{(2)}$.
\qed\end{proof}
% From now we will only consider $P_2$-packings $\q$ from $\qq_{(2)}$. All subsequent statements will only hold for those. The aim of this section is to  show that if $\p$ is a maximal $P_2$-packing of size $j$, we can reuse $2.5 \cdot j$ of its vertices. More formally: If there is a $P_2$-packing $\q$ of size $j+1$ we can rely on $|V(\p) \cap V(\q)| \ge 2.5j$.  \\
Suppose there is a path $p$ with  $|V(p) \cap V(\q)|=2$. Then $p$ shares exactly one vertex $p_{q'},p_{q''}$ with paths $q',q'' \in \q$ due to Corollary~\ref{2pfade}.
% $q',q''$. 
In the following  $p_{q'}$ and $p_{q''}$ will always refer to the two cut vertices of the paths $q',q'' \in \q$ which cut a path $p$ with  $|V(p) \cap V(\q)|=2$.

\begin{lemma}\label{2fold}
% Let $\p$ be a maximal $P_{2}$-packing 
Let $\q \in \qq_{(2)}$. 
Consider $p \in \p$ with $|V(p) \cap V(\q)|=2$ and neither $p_{q'}$ nor $p_{q''}$ are $\q$-endpoints. Then one of $q',q''$ is foldable.
\end{lemma}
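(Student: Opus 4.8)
The plan is to reason about the structure of $p$ when $|V(p)\cap V(\q)|=2$ and both cut vertices are $\q$-midpoints, and to exhibit a foldability certificate for one of $q',q''$. First I would fix notation: since $|V(p)\cap V(\q)|=2$, by Corollary~\ref{2pfade} the path $p=p_1p_2p_3$ meets exactly two distinct packing paths $q',q''\in\q$, one vertex each, at the cut vertices $p_{q'}$ and $p_{q''}$. The third vertex of $p$ lies outside $V(\q)$. I would enumerate which of the three positions of $p$ are occupied by the two cut vertices; the relevant observation is that on a $P_2$ with three positions, two occupied positions always leave at least one of the two occupied vertices adjacent (along $p$) to the free vertex $p_r\notin V(\q)$.

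Next, recall the definition of foldable: $q$ is foldable on $p$ if the cut vertex is the \emph{midpoint} $q_2$ of $q$, sitting at some position $p_s$, and one of its $p$-neighbors $p_{s\pm1}$ lies outside $V(\q)$. By hypothesis neither $p_{q'}$ nor $p_{q''}$ is a $\q$-endpoint, so each cut vertex is the midpoint of its respective path, which is exactly the midpoint condition needed for folding. So the only thing left to verify is that at least one of the two cut vertices has a $p$-neighbor outside $V(\q)$. That is where the free vertex comes in: whichever of the two cut vertices is adjacent on $p$ to the free vertex $p_r$ witnesses the second half of the foldability condition, and that path $q'$ or $q''$ is then foldable on $p$.

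The main obstacle, and the case I would handle carefully, is the middle arrangement where the two cut vertices occupy the endpoints $p_1$ and $p_3$ of $p$, so the free vertex is the center $p_2$. Here \emph{both} cut vertices are adjacent to $p_2\notin V(\q)$, so in fact both paths are simultaneously foldable, which is more than enough; the genuinely tight subcase is when the free vertex is an endpoint of $p$ (say $p_1$ is free and $p_2,p_3$ are the cut vertices), where only the cut vertex at $p_2$ has the free vertex as a $p$-neighbor, and I must make sure I single out that path rather than the other one. In every arrangement, though, the counting on three positions guarantees a cut vertex adjacent to the free vertex, so after checking the handful of position cases the conclusion follows directly from the definition of foldability. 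I would present it by letting $p_r$ denote the unique vertex of $p$ not in $V(\q)$, observing $p_r$ has at least one neighbor on $p$ that is a cut vertex, and declaring that corresponding path foldable.
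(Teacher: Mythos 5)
Your proposal is correct and follows essentially the same route as the paper's proof: identify the unique vertex $p_f$ of $p$ outside $V(\q)$, note that it is adjacent along $p$ to at least one of the two cut vertices, and since that cut vertex is by hypothesis a $\q$-midpoint, the corresponding path is foldable via a $(q_1,p_f)$-folding. Your extra case distinction on where the free vertex sits is just a more explicit version of the paper's ``w.l.o.g.\ $\{p_i,p_f\}\in E(p)$'' step.
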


\begin{proof}
Let $i,j \in \{1,2,3\}$ such that $p_{q'}=p_i$ and $p_{q''}=p_j$. Then for $f \in \{1,2,3\} \setminus \{i,j\}$, 
we have $p_f \not \in V(\q)$. W.l.o.g., $\{p_i ,p_f\} \in E(p)$. Then $q'$ is  $(q'_1,p_f)$-foldable.
\qed\end{proof}

\begin{corollary}\label{endpoint}
Let $\q \in \qq_{(2)}$ 
%$\p$ be a maximal $P_{2}$-packing 
and $p \in \p$ with $|V(p) \cap V(\q)|=2$.  Then one of $p_{q'}, p_{q''}$ must be a $\q$-endpoint. 
\end{corollary}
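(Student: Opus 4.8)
The plan is to prove this contrapositively, combining the two immediately preceding lemmas. First I would note that the hypotheses of the corollary place us exactly in the setting of Lemma~\ref{2fold}: since $\q \in \qq_{(2)} \subseteq \qq_{(1)}$ and $|V(p) \cap V(\q)| = 2$, Corollary~\ref{2pfade} guarantees that $p$ meets two distinct paths $q',q'' \in \q$, each in a single vertex, and these single shared vertices are precisely the cut vertices $p_{q'}$ and $p_{q''}$. So the objects named in Lemma~\ref{2fold} coincide with those named in the corollary.

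Next I would assume, for contradiction, that neither $p_{q'}$ nor $p_{q''}$ is a $\q$-endpoint. This is verbatim the extra hypothesis of Lemma~\ref{2fold}, so that lemma immediately yields that one of $q',q''$ is foldable. On the other hand, Lemma~\ref{nofold} asserts that for $\q \in \qq_{(2)}$ no $q \in \q$ is foldable at all. These two conclusions are incompatible, so the assumption is untenable and at least one of $p_{q'}, p_{q''}$ must be a $\q$-endpoint, which is the claim.

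Since essentially all of the combinatorial work is already carried by Lemmas~\ref{2fold} and~\ref{nofold}, I do not expect a genuine obstacle here; the corollary is a short consequence of the extremal maximality built into $\qq_{(2)}$. The only point deserving a moment's care is verifying that the contradiction hypothesis lines up precisely with the premise of Lemma~\ref{2fold} (that the two cut vertices are the $p_{q'},p_{q''}$ of the ambient convention), which the paragraph preceding Lemma~\ref{2fold} has already fixed.
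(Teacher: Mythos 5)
Your proof is correct and follows the same route as the paper: assume neither cut vertex is a $\q$-endpoint, apply Lemma~\ref{2fold} to get a foldable path in $\q$, and contradict Lemma~\ref{nofold}. The paper's own proof is exactly this two-line contradiction, so no further comment is needed.
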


\begin{proof}
Assume the contrary.
%Then u
% Using 
Lemmas~\ref{nofold} and~\ref{2fold} lead to a contradiction.
\qed\end{proof}
%Due to the last corollary we can assume w.l.o.g. that $p_{q'}$ is an endpoint of $q'$.

\begin{theorem}\label{p2}
Let $\p$ be a maximal $P_2$-packing of size $j$. If there is a $P_2$-packing of size $(j+1)$, then there is also a packing $\q \in  \qq_2$ such that  $|V(\p) \cap V(\q)|\ge 2.5j$.
\end{theorem}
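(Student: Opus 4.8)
The plan is to recast the statement as a single counting inequality and then prove that inequality by building an injection from the partially used paths of $\p$ into the fully used ones.

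First I would observe that $\qq_{(2)} \subseteq \qq_{(1)}$, so by the first lemma of this section every $p \in \p$ meets $V(\q)$ in at least two vertices; thus $\p = \mc{P}_2(\q) \dcup \mc{P}_3(\q)$. Writing $a := |\mc{P}_2(\q)|$ and $b := |\mc{P}_3(\q)|$ we have $a+b = j$ and
\[
 |V(\p) \cap V(\q)| = \sum_{p \in \p}|V(p) \cap V(\q)| = 2a + 3b,
\]
so $|V(\p) \cap V(\q)| \ge 2.5\,j = 2.5(a+b)$ is equivalent to $b \ge a$. It therefore suffices to exhibit an injection $\phi\colon \mc{P}_2(\q) \to \mc{P}_3(\q)$.

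To define $\phi$, fix $p \in \mc{P}_2(\q)$ and let $w$ be the unique vertex of $p$ outside $V(\q)$ (the other two vertices of $p$ are cut vertices, each shared with a distinct path of $\q$ by Corollary~\ref{2pfade}). Pick a cut vertex $c$ of $p$ adjacent to $w$ along $p$; such a $c$ exists since $w$ has a neighbour on the path $p$. Were $c$ a $\q$-midpoint, its $\q$-path would be foldable on $p$ (its midpoint $c$ has the $p$-neighbour $w \notin V(\q)$), contradicting Lemma~\ref{nofold}; hence $c$ is a $\q$-endpoint. Its path $q$ meets $p$ only in $c$, so $q$ is shiftable with respect to $c$ on $p$, and Lemma~\ref{shiftable} returns a path $p' \neq p$ containing the edge of $q$ joining the midpoint of $q$ to the endpoint of $q$ distinct from $c$. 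Both of these vertices lie in $V(q)\cap V(p')$, so $p'$ shares two vertices with the single $\q$-path $q$; since a member of $\mc{P}_2(\q)$ shares only one vertex with each of two distinct $\q$-paths (Corollary~\ref{2pfade}), $p'$ cannot be in $\mc{P}_2(\q)$, whence $p' \in \mc{P}_3(\q)$. I put $\phi(p) := p'$.

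For injectivity I would use that every edge of the $P_2$ path $p'$ runs through its midpoint $p'_2$. The edge supplied by Lemma~\ref{shiftable} is such an edge, so $p'_2$ is one of its endpoints and hence lies on $q$; as the paths of $\q$ are vertex-disjoint, $p'_2$ determines $q$ uniquely. The entry vertex $c$ is then the unique endpoint of $q$ absent from $p'$ (it belongs to $V(p)$, which is disjoint from $V(p')$), and $c$ in turn recovers $p$ as the unique path of $\p$ through $c$. So no two elements of $\mc{P}_2(\q)$ have the same image, $\phi$ is injective, $b \ge a$, and the claim follows. The crux I expect is the middle step: guaranteeing for every half-used $p$ a shiftable $\q$-endpoint adjacent to $w$ — exactly where the extremal hypotheses on $\q$ (no foldable path, Lemma~\ref{nofold}) are indispensable — and then forcing the image of the shifting lemma into $\mc{P}_3(\q)$ via its capturing two vertices of one $\q$-path. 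The counting reduction and the midpoint-based injectivity check are routine by comparison.
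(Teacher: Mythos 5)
Your proof is correct and takes essentially the same route as the paper: reduce the claim to the existence of an injection from $\mc{P}_2(\q)$ into $\mc{P}_3(\q)$, find on each half-used $p$ a shiftable $\q$-endpoint at the cut vertex next to the uncovered vertex, and let Lemma~\ref{shiftable} deliver the fully used target path $p'$, with the same $2a+3b\ge 2.5(a+b)$ count at the end. The only (harmless) differences are that you obtain the endpoint directly from Lemma~\ref{nofold} applied to the cut vertex adjacent to $w$, where the paper routes through Corollary~\ref{endpoint} and a short case analysis, and that your injectivity check via $p'_2$ makes explicit what the paper states tersely.
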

\begin{proof}Suppose there is a path $p \in \p$ with $|V(p) \cap V(\q)|=2$. By Corollary~\ref{endpoint}, w.l.o.g.. $p_{q'}$ is a $\q$-endpoint. For $p_{q''}$ there are two possibilities: {\bf a)} $p_{q''}$ is also a $\q$-endpoint. Let $\{ p_f\}=V(p)\setminus \{p_{q'},p_{q''}\}$. Then w.l.o.g. $p_{q'}$ is a path neighbor of $p_f$. Therefore $p_{q'}$ is shiftable. {\bf b)} $p_{q''}$ is a $\q$-midpoint.\\
{\it Claim.} $p_{q''} \neq p_2$.\\ Suppose the contrary. Then w.l.o.g., $p_{q'}=p_1$ and thus $q''$ is foldable on $p$ by a $(q''_{1},p_3)$-folding. This contradicts Lemma~\ref{nofold}. The claim follows.\\
 W.l.o.g., we assume $p_{q''}=p_1$. Then it follows that $p_{q'}=p_2$, as otherwise a $(q''_{1},p_2)$-folding would contradict Lemma~\ref{nofold} again.\\
From $p_{q'}=p_2$ and $p_3 \not \in V(\q)$ we can derive that also in this case $p_{q'}$ is shiftable.\\[1ex]
We now examine for both cases the implications of the shiftability of $p_{q'}$. 
W.l.o.g., we suppose that $p_{q'}=q'_1$. 
Due to Lemma~\ref{shiftable} there is a $p' \in \p$ with $\{q'_3,q'_2\} \in E(p')$. 
From Corollary~\ref{2pfade}, it follows that there must be a $\bar{q} \in \q \setminus \{q'\}$ with $|V(p') \cap V(\bar{q})|=1$. 
Hence, $|V(p')\cap V(\q)|=3$. 
Note that $q'$ is the only path in $\q$ with $|V(q') \cap V(p')|=2$. 
Summarizing, we can say that for any $p \in \p$ with $|V(p) \cap V(\q)|=2$ we find a distinct $p' \in \p$ (via $q'$) such that $|V(p') \cap V(\q)|=3$. So, there is a total injection $ \gamma$ from $\p_2(\q)$ to $\p_3(\q)$. From $|\p_2(\q) \cup \p_3(\q)|=j$ and the existence of $\gamma$ we derive $|\p_2(\q)|\le 0.5j$. This implies $|V(\p) ¸\cap V(\q)|=2|\p_2(\q)|+3|\p_3(\q)| \ge 2.5j$.
\qed\end{proof}

\section{The Algorithm}

We like to point out the following two facts about $P_2$-packings. First, if a graph has a $P_2$-packing $\p=\{p^1, \ldots ,p^k\}$, then it suffices to know the set of midpoints ${\cal M}_{\p}=\{p^1_2,\ldots ,p^k_2\}$ to construct a $P_2$-packing of size $k$ (which is possibly $\p$) in poly-time. This fact was discovered by E. Prieto and C. Sloper~\cite{PriSlo2004} and basically can be achieved by bipartite matching techniques. Second, it also suffices to know the set of endpoint pairs $E_{\p}=\{(p^1_1,p^1_3),\ldots , (p^k_1,p^k_3) \}$ to construct a $P_2$-packing of size $k$ in poly-time.  This is due to Lemma 3.3 of Jia \emph{et al.}~\cite{JiaZhaChe2004} as any $P_2$-packing instance also can be viewed as a {\sc 3-set packing} instance.

\begin{algorithm}
\caption{An Algorithm for {\sc $P_2$-packing}.}
\label{algo1}
\begin{algorithmic}[1]
\STATE $\p = \emptyset$.
\STATE Greedily augment $\p$ to a maximal $P_2$-packing.
\STATE Apply {\bf Rule 1} and {\bf Rule 2} exhaustively and call the resulting packing $\p$.
\IF{There is a fat crown or double crown $(C,H,R)$ with $C \subseteq V \setminus V(\p)$}
\STATE $k \gets k -|H|$. $G \gets G -H -C$.
\STATE {\bf Goto} 1.
\ELSIF{$k\le 0$}
\STATE {\bf return} \YES
\ELSE
\STATE  Try to construct a $P_2$-packing $\p'$ From $\p$ with $|\p|+1=|\p'|$ using Algorithm~\ref{algo2}.
\IF{Step 10 failed}
\STATE {\bf return } \NO.
\ELSE
\STATE $\p \gets \p'$.
\STATE {\bf Goto} 2.
\ENDIF
\ENDIF
\end{algorithmic}
\end{algorithm}

\begin{algorithm}
\caption{An Algorithm for augmenting a maximal $P_2$-packing $\p$.}
\label{algo2}
\begin{algorithmic}[1]
\STATE $j  \gets |\p|$.
\FOR{$\ell$=0 to $0.3251j $}
\FORALL{$S_{i} \subseteq V(\p)$, $S_o \subseteq V \setminus V(\p)$ with $|S_{i}|=(j+1)-\ell$ and  $|S_o|=\ell  $}
\STATE Try to construct a $P_2$-packing $\p'$ with  $S_i \cup S_o$ as midpoints.
\IF{Step 4 succeeded}
\STATE {\bf return} $\p'$.
\ENDIF
\ENDFOR
\ENDFOR
\FOR{$\bar{\ell}=0$ to $0.1749j+3$}
\FORALL{$B_{i} \subseteq V(\p)$, $B_o \subseteq V \setminus V(\p)$ with $|B_{i}|=2(j+1)-\bar{\ell}$ and  $|B_o|=\bar{\ell} $}
\FORALL{possible endpoint pairs $(e^1_1,e^1_2),\ldots , (e^{j+1}_1,e^{j+1}_2)$ from $B_{i} \cup B_o$}
\STATE Try to construct a $P_2$-packing $\p'$with $(e^1_1,e^1_2),\ldots , (e^{j+1}_1,e^{j+1}_2)$ as endpoint pairs.
\IF{Step 10 succeeded}
\STATE {\bf return} $\p'$.
\ENDIF
\ENDFOR
\ENDFOR
\ENDFOR
\STATE {\bf return } failure.
\end{algorithmic}
\end{algorithm}

\subsection{Correctness}

\paragraph{Algorithm~\ref{algo1}.}
Steps one to six of Algorithm~\ref{algo1} are used for finding fat and double crowns. First we build a maximal $P_2$-packing and locally improve it via {\bf Rule 1} and {\bf Rule 2}. If afterwards we have $|Q_0|>2j-3$ we construct a double crown where $j:=|\p|$. If $|Q_1|>j-1$ we find a fat crown. These two actions are directly justified by Lemmas~\ref{lem1} and~\ref{lem2}. If we do not succeed anymore in finding either one of the two crown types we immediately can  rely on $|V(G)|\le 7j$. The next step  tries to construct a new $P_2$-packing $\p'$ from $\p$ such that $\p'$ comprises one more $P_2$ than $\p$. For this we invoke Algorithm~\ref{algo2}.% which we discuss now.\\[1ex]
%Let $O=V(G) \setminus V(\p)$. Clearly, $|O|\le 4j$ as we have $V(\p)|=3j$ and $|V(G)|\le 7j$. 

\paragraph{Algorithm~\ref{algo2}.}
 %Theorem~\ref{p2} yields $|O|\le 0.5j+3$.
If a $P_2$-packing $\p'$ with $|\p'|=j+1$ exists   we can  partion the midpoints $\mc{M}_{\p'}$ in a part which lies within $V(\p)$ and one which lies outside. We call them $\mc{M}^i_{\p'}:=\mc{M}_{\p'} \cap V(\p)$ and $\mc{M}^o_{\p'}:=\mc{M}_{\p'} \cap O$, respectively with $O:=V(\p') \setminus V(\p)$. Theorem~\ref{p2} yields $|O|\le 0.5j+3$ and thus  $|\mc{M}^o_{\p'}|\le 0.5j+3$. Basically, we can find an integer $\ell$ with $0 \le \ell \le 0.5j+3$ such that $|\mc{M}^i_{\p'}|=(j+1)-\ell$ and $|\mc{M}^o_{\p'}|= \ell$.\\
In step 3 we run through every such $\ell$ until we reach $0.3251j$. For any choice of $\ell$ in step 4 we cycle through all possibilities of choosing  sets $S_{i} \subseteq V(\p)$ and  $S_o \subseteq V \setminus V(\p)$ such that  $|S_{i}|=(j+1)-\ell$ and  $|S_o|=\ell$. Here $S_i$ and $S_o$ are candidates for $\mc{M}^i_{\p'}$ and $\mc{M}^o_{\p'}$, respectively. For any choice of $S_i$ and $S_o$ we try to construct a $P_2$-packing. If we succeed once we can return the desired larger $P_2$-packing. Otherwise we reach the point where $\ell=0.3251j$. At this point we change our strategy. Instead of looking for the midpoints of $\p'$ we focus on the endpoints. We do so because this will improve the run time as we will see later.  $O$ is the disjoint union of $\mc{M}^o_{\p'}$ and the endpoints of $\p'$ which do not lie in $V(\p)$ which we call $E^o_{\p'}$. At this point we must have $|\mc{M}^o_{\p'}|>0.3251j$ and  therefore $|E^o_{\p'}|<0.1749j+3$. Now there must be an integer $\bar{\ell}$ with $0 \le \bar{\ell} \le 0.1785j+3$ such that $|E^o_{\p'}|=\bar{\ell}$ and 
the number of endpoints within $V(\p)$ (called $E^i_{\p'}$) must be $2(j+1) -\bar{\ell}$. In step 7 we iterate through $\bar{\ell}$. In the next step we cycle through all candidate sets for $E^o_{\p}$ and $E^i_{\p}$ which are called $B_i$ and $B_o$ in the algorithm.\\
In step 9 we consider all possibilities $(e^1_1,e^1_2),\ldots , (e^{j+1}_1,e^{j+1}_2)$ of how to pair the vertices in $B_i \cup B_o$. A pair of endpoints $(e_r^s,e^s_{r+1})$ means that both vertices should appear in the same $P_2$ of $\p'$. Finally, we try to construct $\p'$ from $(e^1_1,e^1_2),\ldots , (e^{j+1}_1,e^{j+1}_2)$ by computing a matching according to \cite{JiaZhaChe2004}.

\subsection{Running Time}
Viewed separately, Algorithm~\ref{algo1} runs in poly-time, as fat and double crowns can be constructed in linear time (Lemmas~\ref{lem3} and~\ref{lem4}). The only exponential run time contribution comes from Algorithm~\ref{algo2}.\\
Here the run times of all steps except steps 3 and 4 are  polynomial in $k$. For any $\ell$ we execute step 3 at most ${3j \choose (j+1)-\ell}  {4j \choose \ell } \in \Oh\left({3j \choose j-\ell}{4j \choose \ell}\right)$ times. Likewise step 8 can be upperbounded by $\Oh \left({3j \choose 2j-\ell}{4j \choose \ell}\right)$.

\begin{lemma}\label{up}
For any integer $z$ with $0\le z \le 0.5j-1$ the following holds:\\[1ex]
$1.$ ${3j \choose j-z}{4j \choose z} < {3j \choose j-(z+1)}{4j \choose z+1}$.\ \ \ $2.$ ${3j \choose 2j-z}{4j \choose z} < {3j \choose 2j-(z+1)}{4j \choose z+1}$.
%\begin{enumerate}
%\item ${3j \choose j-z}{4j \choose z} < {3j \choose j-(z+1)}{4j \choose z+1}$
%\item ${3j \choose 2j-z}{4j \choose z} < {3j \choose 2j-(z+1)}{4j \choose z+1}$
%\end{enumerate}
\end{lemma}

\begin{proof}\
\begin{enumerate}
\item We have
$
\begin{array}{c@{=} c}
 {3j \choose j-(z+1)}{4j \choose z+1} -{3j \choose j-z}{4j \choose z}&\frac{(3j)!(4j)!((j-z)(4j-z) - (2j+z+1)(z+1))}{(j-z)!(2j+z+1)!(z+1)!(4j-z)!}
\end{array} 
$.\\[1ex]
Now it is enough to show $(j-z)(4j-z) - (2j+z+1)(z+1))>0$ which evaluates to $4j^2-7jz-2j-2z-1>0$. For the given $z$ this always is true.\\
\item We have  $
\begin{array}{c@{=} c}
 {3j \choose 2j-(z+1)}{4j \choose z+1} -{3j \choose 2j-z}{4j \choose z}&\frac{(3j)!(4j)!((2j-z)(4j-z) - (j+z+1)(z+1))}{(2j-z)!(j+z+1)!(z+1)!(4j-z)!}
\end{array} 
$.\\[1ex]
Then $((2j-z)(4j-z)-(j+z+1)(z+1))=8j^2-7jz-j-2z-1$  which for the given $z$ is greater than zero.
\qed
\end{enumerate}
\end{proof}
\noindent
With Lemma~\ref{up} step 3  is upperbounded by $\Oh\left({3j \choose (0.6749)j}{4j \choose 0.3251j}\right)$ and step 8 by $\Oh \left({3j \choose 1.8251j}{4j \choose 0.1749j}\right)$. Both are dominated by $\Oh(15.285^j)$. 
%We like to point out 
 Notice the asymptotic speed-up we achieve by changing the strategy. If we would skip the search for the endpoints, we would have to count $\ell$ up to $0.5j$ in step 3. Then, %we had that 
${3j \choose 0.5j}{4j \choose 0.5j} \in \Oh(17.44^j)$ which is also not a big improvement compared to a brute force search for the midpoints on the $7k$-kernel, taking
%. Namely, this would take 
 $\Oh^*(17.66^k)$ steps. We conclude: 

\begin{theorem}
 {\sc $P_2$-packing} can be solved in time $\Oh^*(2.482^{3k})$.
\end{theorem}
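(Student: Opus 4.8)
The plan is to read the running time directly off Algorithms~\ref{algo1} and~\ref{algo2}, taking correctness for granted since it was already argued in the previous subsection (and ultimately rests on the reuse bound $|O|\le 0.5j+3$ derived from Theorem~\ref{p2}). Thus I may assume Algorithm~\ref{algo1} answers \YES\ exactly when a $P_2$-packing of size $k$ exists, and it remains only to bound its cost. Since every step of Algorithm~\ref{algo1} other than the call to Algorithm~\ref{algo2} runs in polynomial time (crowns are constructed in linear time by Lemmas~\ref{lem3} and~\ref{lem4}), the whole running time is governed by the augmentation calls, and I would bound it in three stages: the cost of one call, the number of calls, and a final arithmetic comparison.

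First I would fix a single augmentation call with current packing size $j=|\p|$. After kernelization the instance has at most $7j$ vertices by Theorem~\ref{thm-p2kernel}; of these, $3j$ lie in $V(\p)$, leaving at most $4j$ outside. This is the source of the binomials $\binom{3j}{\cdot}$ and $\binom{4j}{\cdot}$ that count the choices of $(S_i,S_o)$ in step~3 and of $(B_i,B_o)$ in step~8. Using Lemma~\ref{up}, each such product is strictly increasing in its outer-loop variable over the traversed range, so each loop is dominated by its last iteration: step~3 by $\binom{3j}{0.6749j}\binom{4j}{0.3251j}$ and step~8 by $\binom{3j}{1.8251j}\binom{4j}{0.1749j}$. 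The threshold $0.3251$ is chosen precisely to balance the two phases; note $0.3251+0.1749=0.5$, matching the split $|O|\le 0.5j+3$ coming from Theorem~\ref{p2}.

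The hard part will be the quantitative estimate that both of these endpoint products grow as $\Oh(15.285^j)$. I would obtain it by a Stirling-type (binary-entropy) estimate, writing each $\binom{n}{\alpha n}$ as $2^{(H(\alpha)+o(1))n}$ for the binary entropy $H$ and then checking numerically that the two resulting exponential bases coincide and sit just below $15.285$. This is the numerical core of the theorem and is exactly what fixes the constant, through the relation $2.482^3>15.285$.

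Finally I would assemble the pieces. Algorithm~\ref{algo1} makes only polynomially many passes: each pass either shrinks the parameter via a crown (at most $k$ times, since $|H|\ge 1$ and $k$ never grows) or enlarges $\p$ by one path toward size $k$, so the number of passes is $\Oh(k^2)$. Every pass is polynomial apart from its single call to Algorithm~\ref{algo2}, which for the current size $j\le k$ costs $\Oh^*(15.285^j)\subseteq\Oh^*(15.285^k)$; summing the geometric series over all calls keeps the total at $\Oh^*(15.285^k)$. Since $2.482^3>15.285$ gives $15.285^k\le 2.482^{3k}$, this yields the claimed bound $\Oh^*(2.482^{3k})$.
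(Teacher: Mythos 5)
Your proposal matches the paper's own analysis: the exponential cost is isolated in steps~3 and~8 of Algorithm~\ref{algo2}, bounded via the binomials $\binom{3j}{\cdot}\binom{4j}{\cdot}$ coming from the $7k$-kernel split into $3j$ packing vertices and at most $4j$ outside vertices, made monotone by Lemma~\ref{up}, numerically estimated as $\Oh(15.285^j)$, and wrapped in polynomially many iterations of Algorithm~\ref{algo1} to give $\Oh^*(2.482^{3k})$. This is essentially the same route as the paper, just spelled out in slightly more detail (e.g.\ the entropy estimate and the count of passes, which the paper leaves implicit).
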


\section{Future work}

It would be nice to derive smaller kernels than $7k$ or $1.5k$ for {\sc $P_2$-packing} or {\sc total edge cover}, resp.,
in view of the mentioned lower bound results~\cite{Cheetal2007}.

We try to apply extremal combinatorial methods to save colors for related problems, like $P_d$-packings for $d\geq 3$.
First results seem to be promising.
So, a detailed combinatorial (extremal structure) study of (say graph) structure under the perspective
of a specific combinatorial problem seems to pay off not only for kernelization (as explained with much
detail in~\cite{Estetal2005}), but also for iterative
augmentation (and possibly compression).

It would be also interesting to work on exact algorithms for {\sc maximum $P_2$-packing}.
By using dynamic programming, this problem can be solved in time $\Oh^*(2^n)$.
By Theorem~\ref{thm-tec-kernel}, {\sc total edge cover} can be solved in time
 $\Oh^*(2^{1.5k})\subseteq\Oh^*(2.829^k)$.
Improving on exact algorithmics would also improve on the parameterized algorithm
for  {\sc total edge cover}. Likewise, finding for example a search-tree algorithm
for  {\sc total edge cover} would be interesting. 
% Call this algorithm EX0. Since we
% have quite a small kernel for {\sc total edge cover}, that problem can be solved in time
% $\Oh^*(2^{1.5k})$. Using \cite[Theorem 3.14]{Cheetal2007}, we could use the two
% parameterized algorithms in a Win-Win game to obtain an exact algorithm EX1 with running time
% $\Oh^*(1.482^n)$. This immediately yields an algorithm with running time
% $\Oh^*(1.804^k)$ for {\sc total edge cover}.
% So, \cite[Theorem 3.14]{Cheetal2007} yields another exact algorithm EX2
% with running time
% $\Oh^*(1.376^n)$.
% Already this version gives, together with the $7k$-kernel, an 
% $\Oh(2.104^{3k})$-algorithm P2 for 
%  {\sc $P_2$-packing} and a further improved algorithm T2 for {\sc total edge cover}.
% Using P2 and T2, we obtain a new exact algorithm EX3
% with running time $\Oh(1.338^n)$.
% % 2.103515183285697088
% % 1.612901770271661928
% This gives an algorithm P3 with running time
% $\Oh^*(1.972^{3k})$ for {\sc $P_2$-packing} and 
% an algorithm T3 with running time
% $\Oh^*(1.548^{k})$ for {\sc total edge cover}.
% This yields an exact algorithm Ex4 with run time
% $\Oh^*(1.696^n)$, so we stop our process here.
% 1.971802534882633129
% 1.547230636023232631

\bibliographystyle{plain}
%\bibliography{../../abbrev,hen}%,zukopieren}

\end{document}